\PassOptionsToPackage{unicode,hidelinks}{hyperref}
\PassOptionsToPackage{hyphens}{url}
\PassOptionsToPackage{dvipsnames,svgnames,x11names}{xcolor}
\documentclass[
  12pt]{article}

\usepackage{amsmath,amssymb}
\usepackage{iftex}
\ifPDFTeX
  \usepackage[T1]{fontenc}
  \usepackage[utf8]{inputenc}
  \usepackage{textcomp} % provide euro and other symbols
\else % if luatex or xetex
  \usepackage{unicode-math}
  \defaultfontfeatures{Scale=MatchLowercase}
  \defaultfontfeatures[\rmfamily]{Ligatures=TeX,Scale=1}
\fi
\usepackage{newtxtext}
\ifPDFTeX\else  
\fi
\IfFileExists{upquote.sty}{\usepackage{upquote}}{}
\IfFileExists{microtype.sty}{% use microtype if available
  \usepackage[]{microtype}
  \UseMicrotypeSet[protrusion]{basicmath} % disable protrusion for tt fonts
}{}
\makeatletter
\@ifundefined{KOMAClassName}{% if non-KOMA class
  \IfFileExists{parskip.sty}{%
    \usepackage{parskip}
  }{% else
    \setlength{\parindent}{0pt}
    \setlength{\parskip}{6pt plus 2pt minus 1pt}}
}{% if KOMA class
  \KOMAoptions{parskip=half}}
\makeatother
\usepackage{xcolor}
\makeatletter
\ifx\paragraph\undefined\else
  \let\oldparagraph\paragraph
  \renewcommand{\paragraph}{
    \@ifstar
      \xxxParagraphStar
      \xxxParagraphNoStar
  }
  \newcommand{\xxxParagraphStar}[1]{\oldparagraph*{#1}\mbox{}}
  \newcommand{\xxxParagraphNoStar}[1]{\oldparagraph{#1}\mbox{}}
\fi
\ifx\subparagraph\undefined\else
  \let\oldsubparagraph\subparagraph
  \renewcommand{\subparagraph}{
    \@ifstar
      \xxxSubParagraphStar
      \xxxSubParagraphNoStar
  }
  \newcommand{\xxxSubParagraphStar}[1]{\oldsubparagraph*{#1}\mbox{}}
  \newcommand{\xxxSubParagraphNoStar}[1]{\oldsubparagraph{#1}\mbox{}}
\fi
\makeatother

\usepackage{longtable,booktabs,array}
\usepackage{calc} % for calculating minipage widths
\usepackage{etoolbox}
\makeatletter
\patchcmd\longtable{\par}{\if@noskipsec\mbox{}\fi\par}{}{}
\makeatother
\IfFileExists{footnotehyper.sty}{\usepackage{footnotehyper}}{\usepackage{footnote}}
\makesavenoteenv{longtable}
\usepackage{graphicx}
\makeatletter
\def\maxwidth{\ifdim\Gin@nat@width>\linewidth\linewidth\else\Gin@nat@width\fi}
\def\maxheight{\ifdim\Gin@nat@height>\textheight\textheight\else\Gin@nat@height\fi}
\makeatother
\setkeys{Gin}{width=\maxwidth,height=\maxheight,keepaspectratio}
\makeatletter
\def\fps@figure{htbp}
\makeatother

\makeatletter
\@ifpackageloaded{caption}{}{\usepackage{caption}}
\AtBeginDocument{%
\ifdefined\contentsname
  \renewcommand*\contentsname{Table of contents}
\else
  \newcommand\contentsname{Table of contents}
\fi
\ifdefined\listfigurename
  \renewcommand*\listfigurename{List of Figures}
\else
  \newcommand\listfigurename{List of Figures}
\fi
\ifdefined\listtablename
  \renewcommand*\listtablename{List of Tables}
\else
  \newcommand\listtablename{List of Tables}
\fi
\ifdefined\figurename
  \renewcommand*\figurename{Figure}
\else
  \newcommand\figurename{Figure}
\fi
\ifdefined\tablename
  \renewcommand*\tablename{Table}
\else
  \newcommand\tablename{Table}
\fi
}
\@ifpackageloaded{float}{}{\usepackage{float}}
\floatstyle{ruled}
\@ifundefined{c@chapter}{\newfloat{codelisting}{h}{lop}}{\newfloat{codelisting}{h}{lop}[chapter]}
\floatname{codelisting}{Listing}

\makeatother
\makeatletter
\@ifpackageloaded{caption}{}{\usepackage{caption}}
\@ifpackageloaded{subcaption}{}{\usepackage{subcaption}}
\makeatother

\ifLuaTeX
  \usepackage{selnolig}  % disable illegal ligatures
\fi
\usepackage[]{natbib}
\usepackage{bookmark}

\IfFileExists{xurl.sty}{\usepackage{xurl}}{} % add URL line breaks if available
\hypersetup{
  pdftitle={Title},
  pdfauthor={Author 1; Author 2},
  pdfkeywords={3 to 6 keywords, that do not appear in the title},
  colorlinks=true,
  linkcolor={blue},
  filecolor={Maroon},
  citecolor={Blue},
  urlcolor={Blue},
  pdfcreator={LaTeX via pandoc}}

\usepackage{bm}
\usepackage{mathtools}
\mathtoolsset{showonlyrefs=true}
\usepackage{amsthm}
\newtheorem{assumption}{Assumption}

\newtheorem{proposition}{Proposition}
\newtheorem{corollary}{Corollary}
\newtheorem{supproposition}{Supplementary Proposition}

\theoremstyle{remark}
\newtheorem{remark}{Remark}

\newcommand{\1}{\mathbb{I}}
\newcommand{\E}{\mathbb{E}}
\newcommand{\Var}{\mathrm{Var}}
\newcommand{\Cov}{\mathrm{Cov}}

\newcommand{\R}{\mathbb{R}}
\newcommand{\iid}{\stackrel{\text{iid}}{\sim}}

\newcommand{\calD}{\mathcal{D}}
\newcommand{\calN}{\mathcal{N}}
\newcommand{\calA}{\mathcal{A}}
\newcommand{\gauss}{\mathcal{N}}

\newcommand{\bQ}{\bm{Q}}
\newcommand{\bW}{\bm{W}}
\newcommand{\bD}{\bm{D}}
\newcommand{\bI}{\bm{I}}

\newcommand{\bu}{\bm{u}}
\newcommand{\beps}{\bm{\varepsilon}}
\newcommand{\bOne}{\bm{1}}

\usepackage[hidelinks]{hyperref}

\newcommand{\anon}{1}

\begin{document}

\def\spacingset#1{\renewcommand{\baselinestretch}%
{#1}\small\normalsize} \spacingset{1}

%%%%%%%%%%%%%%%%%%%%%%%%%%%%%%%%%%%%%%%%%%%%%%%%%%%%%%%%%%%%%%%%%%%%%%%%%%%%%%

\if1\anon
{
  \title{\bf A Bayesian Longitudinal Spatial Normative Model For Individualized Brain Deviation Mapping}
  \author{Joshua T. Korley\thanks{
    This research received no specific grant from any funding agency in the public, commercial, or not-for-profit sectors. The author thanks colleagues and peers who provided helpful feedback and discussion during the development of this work. The author also acknowledges the OASIS-3 investigators and participants for making the neuroimaging and clinical data publicly available}\hspace{.2cm}\\
    Department of Epidemiology \& Biostatistics, University of South Carolina\\
    % and \\
    % Author 2 \\
    % Department of ZZZ, University of WWW
    \texttt{jkorley@email.sc.edu}
    }
    \date{}
  \maketitle
} \fi

\if0\anon
{
  \bigskip
  \bigskip
  \bigskip
  \begin{center}
    {\LARGE\bf A Bayesian Longitudinal Spatial Normative Model}
\end{center}
  \medskip
} \fi

\bigskip
\begin{abstract}
Normative modeling enables individualized characterization of structural brain deviations by evaluating subjects against a reference population rather than a group average. Most existing implementations treat brain regions independently and remain cross-sectional, despite the availability of repeated neuroimaging measurements and the well-documented spatial organization of neuroanatomical variation. We propose a Bayesian longitudinal spatial normative model that jointly captures within-subject temporal dependence and spatially structured subject-specific deviations within a unified hierarchical framework. The individualized deviation map is treated as a latent spatial process with an explicit posterior distribution, yielding a principled Bayes estimator under squared error loss rather than an ad hoc residual summary. Across six simulation scenarios encompassing varying spatial dependence, nonlinear trajectories, irregular visit schedules, and missing follow-up, the proposed model consistently reduced deviation-map reconstruction error relative to cross-sectional and longitudinal non-spatial benchmarks while maintaining stable calibration. In an application to OASIS-3 structural MRI data, the model reduced RMSE by 54\% and 45\% relative to the two benchmarks. Regional deviation burden was concentrated in the temporal pole, entorhinal cortex, inferior temporal cortex, posterior cingulate, and parahippocampal cortex. Subject-level profiles revealed substantial heterogeneity in regional abnormality patterns, including marked multiregional deviation with preserved global cognitive scores.
\end{abstract}

\noindent%
{\it Keywords:} Hierarchical models, Neuroimaging, Conditional autoregressive prior, Repeated measures, Alzheimer's disease, Posterior inference.
\vfill

\newpage
\spacingset{1.8} 

\section{Introduction}\label{sec-intro}
Neuroimaging studies have traditionally relied on population-average comparisons between diagnostic groups to characterize structural brain abnormalities. Although these approaches have produced important insights into neurological and psychiatric disease, they are less suited for describing heterogeneity at the individual level. Subjects with similar clinical diagnoses may exhibit markedly different spatial patterns of brain change, while substantial overlap often exists between patient and control populations. These limitations have motivated increasing interest in normative modeling approaches that estimate individualized deviations from a reference population rather than focusing exclusively on group-average effects \citep{Marquand2016,Rutherford2022,Bethlehem2022}.

Normative modeling shifts the focus of inference from group contrasts to individual deviation. The objective is to characterize a reference distribution for brain measures conditional on relevant covariates and to evaluate how each individual departs from this reference. This perspective parallels growth chart methodology, where observations are interpreted relative to expected trajectories. In neuroimaging, this formulation enables the study of heterogeneity within diagnostic categories and supports identification of subjects whose deviation patterns may reflect clinically meaningful or biologically distinct processes \citep{Marquand2016,Rutherford2022,Fraza2021,Verdi2023}.

Methodological developments have expanded this framework. Hierarchical Bayesian approaches improve robustness to site-specific variability and scanner effects \citep{Kia2020}. Flexible regression strategies extend applicability to non-Gaussian response distributions \citep{Fraza2021}. Large-scale studies demonstrate that normative trajectories can be estimated across the lifespan, yielding reference curves for structural brain phenotypes \citep{Bethlehem2022}. These contributions establish normative modeling as a core tool for individualized neuroimaging analysis.

Key structural features of neuroimaging data remain insufficiently addressed. One limitation concerns longitudinal structure. Repeated measurements are increasingly available, yet many implementations remain effectively cross-sectional or rely on sequential procedures that do not fully account for within-subject dependence. This leads to inefficient use of data and unstable deviation estimates when trajectories evolve over time \citep{Buckova2025}.

A second limitation concerns spatial dependence. Brain measurements exhibit structured relationships across anatomically and functionally related regions. Deviations from normative expectations frequently appear in spatially coherent patterns. Treating regions independently neglects this structure, reducing efficiency and distorting the geometry of estimated deviation maps. Spatial Bayesian models address related problems through structured priors that borrow strength across neighboring regions \citep{Huertas2017,Mejia2022}. Spectral approaches extend spatial representation to higher resolution settings \citep{Mansour2025}. These developments are rarely integrated within a normative modeling framework that targets individualized deviation in longitudinal data.

A third limitation concerns the role of deviation itself. In many applications, deviation is defined implicitly as a residual from a fitted model. When deviation is the primary scientific object, this representation is inadequate. In neurodegenerative and neurodevelopmental disorders, spatial configurations of deviation may encode information about disease processes that is not captured by marginal mean effects. A formulation that treats deviation as a structured latent quantity permits direct estimation and principled uncertainty quantification.

These limitations define a single inferential problem: estimation of subject-specific deviation maps that evolve over time and exhibit structured dependence across brain regions. Addressing this problem requires a model that captures longitudinal dynamics, spatial structure, and individual-level variation within a coherent probabilistic framework.

We develop a Bayesian longitudinal spatial normative model for repeated structural neuroimaging measurements. The model combines a covariate-driven normative mean function, a subject-specific random intercept for repeated visits, and a spatially structured latent deviation process across brain regions. This formulation treats individualized deviation as the primary inferential target rather than as a secondary residual summary.

The paper makes three contributions. First, it introduces a region-level Bayesian normative model that jointly captures longitudinal dependence and spatially structured subject-specific deviation. Second, it characterizes the induced covariance structure and the posterior distribution of individualized spatial deviation maps under Gaussian assumptions. Third, it evaluates the model through simulation studies and an OASIS-3 structural MRI application, comparing it with independent cross-sectional and longitudinal non-spatial alternatives.
This work addresses a gap in the literature. Prior research has established the importance of normative modeling for individualized inference \citep{Marquand2016,Rutherford2022,Fraza2021}, the role of hierarchical Bayesian methods for handling multi-site and heterogeneous data \citep{Kia2020}, and the value of spatial and longitudinal modeling when considered separately. In particular, longitudinal neuroimaging studies have emphasized subject-specific trajectories over time, while spatial Bayesian models have demonstrated the importance of borrowing information across anatomically related regions \citep{Mejia2022,Mansour2025}. 

However, these developments have largely progressed in parallel. Existing spatial longitudinal models are primarily formulated for population-level inference, focusing on group differences or disease progression, and do not operate within a normative modeling framework that enables individualized deviation estimation. Conversely, current normative modeling approaches, including recent scalable Bayesian formulations, typically treat brain regions independently or do not explicitly incorporate subject-specific spatial structure.

A region-level framework that jointly models longitudinal trajectories and spatially structured subject-specific deviations, while remaining interpretable and amenable to rigorous evaluation, remains limited. The proposed approach provides such a framework and supports individualized characterization of brain variation over time with quantified uncertainty.

The remainder of the paper is organized as follows. Section~\ref{sec-meth} presents the proposed Bayesian longitudinal spatial normative model, including the hierarchical formulation, posterior inference, and theoretical properties. Section~\ref{sec-simulation} describes the simulation study and comparative evaluation under varying spatial and longitudinal data-generating settings. Section~\ref{sec-oasis} applies the proposed framework to longitudinal structural MRI data from OASIS-3. Section~\ref{sec-discu} concludes with discussion, limitations, and directions for future research. Supplementary material provides additional theoretical derivations, simulation details, diagnostic analyses, and extended real-data results.

\section{Methodology}\label{sec-meth}
\subsection{Model formulation}
We propose a Bayesian longitudinal spatial normative model for region-level structural MRI outcomes. Suppose there are $n$ subjects indexed by $i=1,\dots,n$, with subject $i$ observed at visits $t=1,\dots,T_i$. At each visit, measurements are recorded across $R$ brain regions indexed by $r=1,\dots,R$. Let $Y_{itr}$ denote the structural measurement for subject $i$, visit $t$, and region $r$, and let $\bm{X}_{it}\in\R^p$ denote a visit-level covariate vector. The vector of regional measurements is written as
$\bm{Y}_{it}=(Y_{it1},\dots,Y_{itR})^\top.$

For each subject, visit, and region, the proposed model is
\begin{equation}
Y_{itr}
=
\bm{X}_{it}^{\top}\bm{\beta}_r
+
b_i
+
u_{ir}
+
\varepsilon_{itr},
\label{eq:scalar_model}
\end{equation}
where $\bm{\beta}_r$ is a region-specific regression coefficient vector, $b_i$ is a subject-specific random intercept, $u_{ir}$ is a subject- and region-specific latent spatial deviation, and $\varepsilon_{itr}$ is a residual error term. The stochastic components are
\[
b_i\iid \gauss(0,\sigma_b^2),
\qquad
\bu_i=(u_{i1},\dots,u_{iR})^\top \iid \gauss(\bm{0},\tau_u^2\bQ(\rho)^{-1}),
\qquad
\varepsilon_{itr}\iid \gauss(0,\sigma^2),
\]
with mutual independence across the three latent components.

The spatial precision matrix is defined from a known region-level adjacency matrix. A convenient proper conditional autoregressive specification is
\begin{equation}
\bQ(\rho)=\bD-\rho\bW,
\label{eq:Qrho}
\end{equation}
where $\bW$ is an $R\times R$ symmetric adjacency matrix, $\bD$ is diagonal with entries $D_{rr}=\sum_{s=1}^R W_{rs}$, and $\rho$ lies in the admissible interval that ensures positive definiteness of $\bQ(\rho)$ \citep{Banerjee2014,RueHeld2005}. This construction encourages anatomically related regions to share information while still allowing individualized regional heterogeneity.

In vector form, the model is
\begin{equation}
\bm{Y}_{it}
=
\bm{B}^{\top}\bm{X}_{it}
+
b_i\bOne_R
+
\bu_i
+
\beps_{it},
\label{eq:vector_model}
\end{equation}
where
$\bm{B}=\left(\bm{\beta}_1,\dots,\bm{\beta}_R\right)$
is a \(p \times R\) coefficient matrix whose \(r\)-th column corresponds to the region-specific coefficient vector \(\bm{\beta}_r\), and
$\beps_{it}\sim\gauss(\bm{0},\sigma^2\bI_R)$.
The normative mean function $\bm{X}_{it}^{\top}\bm{\beta}_r$ may be linear or may include spline basis terms for nonlinear age effects. In the simplest specification, $\bm{X}_{it}$ includes an intercept, age, sex, and other relevant covariates. More flexible versions can replace age by a basis expansion without changing the hierarchical structure of the model.

\subsection{Priors, prediction, and deviation scores}

For the main Bayesian formulation, weakly informative Gaussian priors are assigned to the region-specific regression coefficients,
$\bm{\beta}_r \iid \gauss(\bm{0},\sigma_\beta^2\bI_p),$
$ r=1,\dots,R.$
Positive scale parameters may be assigned weakly informative half-Cauchy or exponential priors, depending on the computational implementation:
$\sigma_b>0$, $\sigma>0$, and $\tau_u>0$.
The spatial dependence parameter $\rho$ is assigned a prior supported on the admissible interval of $\bQ(\rho)$.

For a new or held-out observation $(i,t,r)$, the model yields a posterior predictive distribution. Let
$\widehat{\mu}_{itr}=
\E(Y_{itr}^{\mathrm{new}}\mid\calD)$,
$\widehat{v}_{itr}=
\Var(Y_{itr}^{\mathrm{new}}\mid\calD),$
where $\calD$ denotes the observed data. Following the normative probability map construction of \citet{Marquand2016}, we define the standardized deviation score
%\begin{equation}
$Z_{itr}=\frac{Y_{itr}-\widehat{\mu}_{itr}}{\sqrt{\widehat{v}_{itr}}}.$

%\label{eq:zscore_main}
%\end{equation}
Large negative values indicate unexpectedly small structural measurements relative to the normative reference model, while large positive values indicate unexpectedly large measurements. The vector $\bm{Z}_{it}=(Z_{it1},\dots,Z_{itR})^\top$ is the individualized regional deviation map at visit $t$.

A subject-level summary of deviation burden may also be constructed by pooling absolute standardized deviations across visits and regions:
$A_i^{(m)}
=
\frac{1}{m}
\sum_{k=1}^{m}|Z_i|_{(k)},$
where $|Z_i|_{(1)}\ge \cdots \ge |Z_i|_{(L_i)}$ are the ordered absolute deviation scores for subject $i$. This summary is useful for ranking subjects by overall deviation burden, but the primary inferential target remains the full regional deviation map.

\subsection{Theoretical properties}

We record the main properties needed for interpretation of the proposed model. Detailed proofs and additional theoretical results are given in the Supplementary Material.

\begin{assumption}[Reference-model regularity]
\label{ass:regularity}
The data arise from the hierarchical Gaussian model in \eqref{eq:scalar_model}, with $\bQ(\rho)$ positive definite for the true value of $\rho$.
\end{assumption}

\begin{assumption}[Conditional independence]
\label{ass:indep}
Conditional on $(\bm{B},b_i,\bu_i)$, the residual vectors $\beps_{it}$ are independent across visits and subjects.
\end{assumption}

\begin{assumption}[Known adjacency]
\label{ass:adj}
The region-level adjacency matrix $\bW$ is fixed and known. The spatial precision matrix $\bQ(\rho)$ is symmetric positive definite on the admissible parameter space of $\rho$.
\end{assumption}

\begin{proposition}[Marginal mean and covariance]
\label{prop:moments}
Under model \eqref{eq:scalar_model}, the marginal mean of $Y_{itr}$ is
$\E(Y_{itr}\mid\bm{X}_{it})
=
\bm{X}_{it}^{\top}\bm{\beta}_r.$
For any two observations $(i,t,r)$ and $(j,s,\ell)$,
\begin{equation}
\Cov(Y_{itr},Y_{js\ell}\mid\bm{X})
=
\1(i=j)\sigma_b^2
+
\1(i=j)\tau_u^2[\bQ(\rho)^{-1}]_{r\ell}
+
\1(i=j,t=s,r=\ell)\sigma^2.
\label{eq:cov_general}
\end{equation}
\end{proposition}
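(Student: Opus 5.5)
The plan is to treat $\bm{B}$ (equivalently each $\bm{\beta}_r$) and the covariates $\bm{X}$ as fixed, i.e.\ to condition on them, and to compute moments directly from the decomposition \eqref{eq:scalar_model}. Under Assumption~\ref{ass:regularity} we write $Y_{itr}-\bm{X}_{it}^{\top}\bm{\beta}_r = b_i + u_{ir} + \varepsilon_{itr}$. Each of the three terms has mean zero: $b_i\sim\gauss(0,\sigma_b^2)$, $\bu_i\sim\gauss(\bm{0},\tau_u^2\bQ(\rho)^{-1})$, and $\varepsilon_{itr}\sim\gauss(0,\sigma^2)$. Linearity of expectation therefore gives $\E(Y_{itr}\mid\bm{X}_{it})=\bm{X}_{it}^{\top}\bm{\beta}_r$ immediately. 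The covariance matrix $\tau_u^2\bQ(\rho)^{-1}$ is well defined because $\bQ(\rho)$ is positive definite by Assumption~\ref{ass:adj}.

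For the covariance, I will expand $\Cov(b_i+u_{ir}+\varepsilon_{itr},\, b_j+u_{j\ell}+\varepsilon_{js\ell})$ by bilinearity into nine terms. The six cross terms between different latent components vanish, by the stated mutual independence of $\{b_i\}$, $\{\bu_i\}$ and $\{\varepsilon_{itr}\}$. Three diagonal terms remain:
\begin{itemize}
\item $\Cov(b_i,b_j)=\1(i=j)\sigma_b^2$, since the $b_i$ are iid across subjects;
\item $\Cov(u_{ir},u_{j\ell})=\1(i=j)\tau_u^2[\bQ(\rho)^{-1}]_{r\ell}$, since the vectors $\bu_i$ are iid across subjects and within a subject the covariance is the $(r,\ell)$ entry of $\tau_u^2\bQ(\rho)^{-1}$;
\item $\Cov(\varepsilon_{itr},\varepsilon_{js\ell})=\1(i=j,t=s,r=\ell)\sigma^2$. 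Here I invoke the iid specification of the errors, which Assumption~\ref{ass:indep} reinforces: the residual vectors are independent across visits and subjects, and each has covariance $\sigma^2\bI_R$ within a visit.
\end{itemize}
Summing these three terms yields \eqref{eq:cov_general}.

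The only point requiring care is the meaning of ``marginal''. The statement is marginal over the latent effects $(b_i,\bu_i)$ and the errors, but conditional on $\bm{B}$ and $\bm{X}$. If $\bm{B}$ were also integrated over its Gaussian prior, the mean would become zero, and the covariance would pick up an extra term $\sigma_\beta^2\1(r=\ell)\bm{X}_{it}^{\top}\bm{X}_{js}$. I will therefore state explicitly at the outset that $\bm{B}$ is held fixed, as in the frequentist reading of the reference model. With that convention there is no real obstacle; the argument is a routine variance-components calculation. As a sanity check I will also record the vector form of the result: for $i=j$, $\Cov(\bm{Y}_{it},\bm{Y}_{is})=\sigma_b^2\bOne_R\bOne_R^\top+\tau_u^2\bQ(\rho)^{-1}+\1(t=s)\sigma^2\bI_R$, and the covariance is zero for $i\neq j$.
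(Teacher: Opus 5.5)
Your proposal is correct and follows the paper's argument: subtract the fixed mean $\bm{X}_{it}^{\top}\bm{\beta}_r$, use the mutual independence of $b_i$, $\bu_i$ and $\varepsilon_{itr}$ to eliminate the cross terms, and read off the three variance-component contributions. Your explicit bilinear expansion, and your remark that $\bm{B}$ must be held fixed (otherwise an extra term $\sigma_\beta^2\1(r=\ell)\bm{X}_{it}^{\top}\bm{X}_{js}$ appears), make precise what the paper's one-line proof leaves implicit.
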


\begin{proof}
The result follows by subtracting the fixed mean in \eqref{eq:scalar_model} and using independence of $b_i$, $\bu_i$, and $\varepsilon_{itr}$. The random intercept contributes $\sigma_b^2$ to all observations from the same subject, the spatial deviation contributes $\tau_u^2[\bQ(\rho)^{-1}]_{r\ell}$ within subject across regions, and the residual contributes $\sigma^2$ only for the same subject, visit, and region.
\end{proof}

\begin{remark}
Proposition~\ref{prop:moments} separates two sources of dependence. The random intercept induces subject-level dependence across all repeated measurements, while the spatial deviation process induces structured regional dependence within subject. Independent-region models do not capture this decomposition.
\end{remark}

For a fixed subject $i$, define the residualized visit vector, $\widetilde{\bm{Y}}_{it}
=
\bm{Y}_{it}
-
\bm{B}^{\top}\bm{X}_{it}
-
b_i\bOne_R.$
Conditional on $(\bm{B},b_i)$, model \eqref{eq:vector_model} implies
$\widetilde{\bm{Y}}_{it}=
\bu_i+\beps_{it}$, $\beps_{it}\iid \gauss(\bm{0},\sigma^2\bI_R).$

\begin{proposition}[Posterior distribution of subject-specific spatial deviations]
\label{prop:posterior_ui}
Fix subject $i$ and suppose $(\bm{B},b_i,\sigma^2,\tau_u^2,\rho)$ are known. Under \eqref{eq:vector_model}, the conditional posterior distribution of $\bu_i$ given all visits of subject $i$ is
\begin{equation}
\bu_i
\mid
\{\bm{Y}_{it}\}_{t=1}^{T_i},
\bm{B},
b_i,
\sigma^2,
\tau_u^2,
\rho
\sim
\gauss(\bm{m}_i,\bm{S}_i),
\label{eq:post_ui}
\end{equation}
where
\begin{align}
\bm{S}_i
&=
\left(
\tau_u^{-2}\bQ(\rho)
+
T_i\sigma^{-2}\bI_R
\right)^{-1},
\label{eq:Si}\\
\bm{m}_i
&=
\bm{S}_i
\left(
\sigma^{-2}
\sum_{t=1}^{T_i}
\widetilde{\bm{Y}}_{it}
\right).
\label{eq:mi}
\end{align}
\end{proposition}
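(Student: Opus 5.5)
The plan is a standard Gaussian conjugacy argument: write the conditional posterior of $\bu_i$ as prior times likelihood and identify a Gaussian by completing the square in the exponent. Conditional on $(\bm{B},b_i)$, the residualized vectors satisfy $\widetilde{\bm{Y}}_{it}=\bu_i+\beps_{it}$. By Assumption~\ref{ass:indep}, these are conditionally independent across $t=1,\dots,T_i$ given $\bu_i$, each distributed as $\gauss(\bu_i,\sigma^2\bI_R)$. Since $\bm{B}$ and $b_i$ are fixed, the map $\{\bm{Y}_{it}\}\mapsto\{\widetilde{\bm{Y}}_{it}\}$ is a known translation. Conditioning on one set of vectors is therefore the same as conditioning on the other, and the Jacobian is trivial.

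First I will write the unnormalized posterior density as
\[
p(\bu_i\mid\cdot)\propto \exp\Big(-\tfrac{1}{2\tau_u^2}\bu_i^\top\bQ(\rho)\bu_i\Big)\prod_{t=1}^{T_i}\exp\Big(-\tfrac{1}{2\sigma^2}\|\widetilde{\bm{Y}}_{it}-\bu_i\|^2\Big).
\]
The prior term uses the prior $\bu_i\sim\gauss(\bm{0},\tau_u^2\bQ(\rho)^{-1})$, whose precision is $\tau_u^{-2}\bQ(\rho)$, and which is independent of $b_i$ and $\beps_{it}$. Next I will expand the squared norms and keep only the terms involving $\bu_i$. The exponent becomes
\[
-\tfrac12\bu_i^\top\big(\tau_u^{-2}\bQ(\rho)+T_i\sigma^{-2}\bI_R\big)\bu_i+\bu_i^\top\Big(\sigma^{-2}\sum_t\widetilde{\bm{Y}}_{it}\Big)+\text{const}.
\]
I will then note that $\bm{S}_i^{-1}=\tau_u^{-2}\bQ(\rho)+T_i\sigma^{-2}\bI_R$ is symmetric positive definite. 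This holds because $\bQ(\rho)$ is positive definite under Assumption~\ref{ass:adj} and $T_i\sigma^{-2}\bI_R$ is positive semidefinite, so $\bm{S}_i$ exists. Completing the square gives $-\tfrac12(\bu_i-\bm{m}_i)^\top\bm{S}_i^{-1}(\bu_i-\bm{m}_i)$ plus a constant, with $\bm{m}_i=\bm{S}_i\,\sigma^{-2}\sum_t\widetilde{\bm{Y}}_{it}$. This is the kernel of $\gauss(\bm{m}_i,\bm{S}_i)$, and normalization gives \eqref{eq:post_ui}--\eqref{eq:mi}.

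There is no real obstacle; the only care needed is in the conditioning bookkeeping. Conditioning on $b_i$ (not integrating it out) keeps the likelihood diagonal in $\bu_i$. The visits of other subjects are irrelevant by the across-subject independence in Assumption~\ref{ass:indep} and the i.i.d.\ structure of the $\bu_j$. As a sanity check, I may also observe that $\bar{\widetilde{\bm{Y}}}_i=T_i^{-1}\sum_t\widetilde{\bm{Y}}_{it}$ is sufficient, with $\bar{\widetilde{\bm{Y}}}_i\mid\bu_i\sim\gauss(\bu_i,\sigma^2T_i^{-1}\bI_R)$. The standard Gaussian prior--likelihood update then gives the same answer directly.
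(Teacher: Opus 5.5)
Your proposal is correct and follows the paper's supplementary proof. In both, the posterior kernel is the CAR prior kernel times the product of the $T_i$ Gaussian likelihood terms for $\widetilde{\bm{Y}}_{it}$; expanding, dropping terms free of $\bu_i$, and completing the square gives $\bm{S}_i$ and $\bm{m}_i$. Your checks that $\bm{S}_i^{-1}$ is positive definite and that the visit-averaged residual is sufficient are points the paper leaves implicit.
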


\begin{proof}
The prior for $\bu_i$ is Gaussian with precision $\tau_u^{-2}\bQ(\rho)$. The likelihood from the $T_i$ residualized visits contributes precision $T_i\sigma^{-2}\bI_R$ and linear term $\sigma^{-2}\sum_{t=1}^{T_i}\widetilde{\bm{Y}}_{it}$. Completing the square yields \eqref{eq:post_ui}--\eqref{eq:mi}. The full algebra is provided in the Supplementary Material.
\end{proof}

\begin{remark}
Proposition~\ref{prop:posterior_ui} shows that the individualized deviation map is a posterior latent quantity, not an ad hoc residual image. The posterior covariance in \eqref{eq:Si} shows how repeated visits and spatial structure jointly influence shrinkage of subject-level deviation estimates.
\end{remark}

\begin{corollary}[Bayes estimator of the individualized deviation map]
\label{cor:bayes_estimator}
Under squared error loss $L(\widehat{\bu}_i,\bu_i)=\|\widehat{\bu}_i-\bu_i\|_2^2,$
the posterior mean
$\widehat{\bu}_i^{\mathrm{Bayes}}
=
\E(\bu_i\mid \{\bm{Y}_{it}\}_{t=1}^{T_i},\bm{B},b_i,\sigma^2,\tau_u^2,\rho)
=
\bm{m}_i$
is the Bayes estimator of the subject-specific spatial deviation map.
\end{corollary}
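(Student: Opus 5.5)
The plan is the standard decision-theoretic argument: minimize the posterior expected loss pointwise in the data, then read off the minimizer from Proposition~\ref{prop:posterior_ui}. Write $\calD_i=\{\bm{Y}_{it}\}_{t=1}^{T_i}$ and condition throughout on $(\bm{B},b_i,\sigma^2,\tau_u^2,\rho)$. The Bayes rule minimizes the Bayes risk $\E[L(\widehat{\bu}_i(\calD_i),\bu_i)]$ over measurable estimators. By the tower property, this risk equals $\E\{\E[\|\widehat{\bu}_i(\calD_i)-\bu_i\|_2^2\mid\calD_i]\}$. It therefore suffices to minimize the inner posterior expected loss separately for each realized $\calD_i$.

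For a fixed data realization and any constant vector $\bm{a}\in\R^R$, the key step is the bias--variance decomposition. Adding and subtracting $\bm{m}_i=\E(\bu_i\mid\calD_i)$ gives
\[
\E\bigl[\|\bm{a}-\bu_i\|_2^2\mid\calD_i\bigr]
=\|\bm{a}-\bm{m}_i\|_2^2+\mathrm{tr}(\bm{S}_i).
\]
The cross term $2(\bm{a}-\bm{m}_i)^\top\E[\bm{m}_i-\bu_i\mid\calD_i]$ vanishes by definition of the conditional mean. Proposition~\ref{prop:posterior_ui} guarantees that the posterior is Gaussian with mean $\bm{m}_i$ and covariance $\bm{S}_i$. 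In particular, both posterior moments are finite, so the decomposition is legitimate and $\mathrm{tr}(\bm{S}_i)<\infty$. Because $\mathrm{tr}(\bm{S}_i)$ does not depend on $\bm{a}$, the posterior expected loss is uniquely minimized at $\bm{a}=\bm{m}_i$.

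Setting $\widehat{\bu}_i(\calD_i)=\bm{m}_i$ for every realization then minimizes the integrand pointwise, and hence minimizes the Bayes risk. Identifying $\bm{m}_i$ with the explicit expression \eqref{eq:mi} from Proposition~\ref{prop:posterior_ui} completes the argument.

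The only potentially delicate point is ensuring that the posterior second moment is finite, so that the Bayes risk is well defined. Gaussianity of the posterior settles this immediately. Positive definiteness of $\tau_u^{-2}\bQ(\rho)+T_i\sigma^{-2}\bI_R$, which holds under Assumption~\ref{ass:adj}, ensures that $\bm{S}_i$ exists.
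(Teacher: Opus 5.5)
Your proposal is correct and follows essentially the same route as the paper's proof in the Supplementary Material: you expand the squared loss around $\bm{m}_i$, the cross term vanishes by the definition of the posterior mean, and the remaining term (which you identify as $\mathrm{tr}(\bm{S}_i)$) does not depend on the estimator, so the minimizer is uniquely $\bm{m}_i$. Your tower-property step linking posterior expected loss to Bayes risk, and your finiteness check via Gaussianity of the posterior, are small refinements that the paper leaves implicit.
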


\begin{remark}
Corollary~\ref{cor:bayes_estimator} gives the Bayesian interpretation of the individualized deviation map used throughout the paper. The posterior mean map is not only a smoothed residual summary; it is the Bayes estimator of the latent subject-specific spatial deviation process under squared error loss. A full decision-theoretic proof is provided in the Supplementary Material.
\end{remark}

\section{Simulation study}\label{sec-simulation}
We evaluated the proposed model through six simulation scenarios representing distinct features of longitudinal neuroimaging data, including varying strengths of spatial dependence, unequal visit schedules, missing follow-up, and nonlinear mean trajectories. The primary goal was to assess whether joint modeling of repeated measurements and spatial dependence improves recovery of subject-specific deviation maps relative to independent cross-sectional and longitudinal non-spatial models.

Each simulated dataset consisted of region-level structural brain measurements observed over repeated visits. The data-generating mechanism matched the hierarchical structure of the proposed model so that the simulation estimand was the subject-specific spatial deviation map. For each Monte Carlo replicate, we generated subject-level covariates, visit times, region-specific normative trajectories, subject random intercepts, spatial deviation maps, observation-level noise, and structured abnormality patterns in selected scenarios. Performance was summarized using estimation accuracy, deviation-map recovery, calibration, and abnormality-detection metrics.
\subsection{Data-generating mechanism}

Let $i=1,\dots,n$ index subjects, $t=1,\dots,T_i$ index visits, and $r=1,\dots,R$ index brain regions. For each subject and visit, we generate a scalar age variable $A_{it}$ and additional covariates collected in the vector $\bm{X}_{it}$. To maintain consistency with the main model, the synthetic response is generated according to
\begin{equation}
Y_{itr} = \bm{X}_{it}^{\top}\bm{\beta}_r + b_i + u_{ir} + \varepsilon_{itr},
\label{eq:sim_dgp}
\end{equation}
where:
\begin{align}
b_i &\iid \gauss(0,\sigma_b^2), \label{eq:sim_bi}\\
\bu_i = (u_{i1},\dots,u_{iR})^\top &\iid \gauss(\bm{0},\tau_u^2 \bQ(\rho)^{-1}), \label{eq:sim_ui}\\
\varepsilon_{itr} &\iid \gauss(0,\sigma^2). \label{eq:sim_eps}
\end{align}

The precision matrix $\bQ(\rho)$ is constructed from a fixed region-level adjacency matrix $\bW$ using
\begin{equation}
\bQ(\rho)=\bD-\rho \bW,
\label{eq:sim_Q}
\end{equation}
where $\bD$ is diagonal with entries $D_{rr}=\sum_{s=1}^{R}W_{rs}$. This is the standard proper conditional autoregressive construction used for structured areal spatial random effects \citep{RueHeld2005,Banerjee2014,DeOliveira2012}. The use of a proper conditional autoregressive (CAR) form is particularly appropriate here because the simulation concerns region-indexed structural measurements rather than continuous Euclidean spatial fields.

\subsubsection{Covariates and visit structure}

In the baseline simulation design, each subject contributes between two and five visits. Let $T_i$ be generated from a discrete distribution on $\{2,3,4,5\}$, or fixed at a common value in scenarios intended to isolate other features of the design. Baseline age is generated as $A_{i1} \iid \text{Uniform}(60,85),$
and follow-up visit times are generated by adding irregular increments: $A_{it} = A_{i1} + \Delta_{it}$, $t=2,\dots,T_i,$
where $\Delta_{it}$ is generated from an increasing sequence with random perturbations to mimic unequally spaced visits. A binary sex indicator $S_i \iid \text{Bernoulli}(0.5)$ is included as a time-invariant covariate. The resulting design vector takes the form $\bm{X}_{it} = (1, A_{it}, S_i)^\top$
in the baseline linear simulation. In nonlinear scenarios, the age term is replaced by a spline basis expansion.

\subsubsection{Region-specific normative coefficients}

For each region $r$, let $\bm{\beta}_r = (\beta_{0r}, \beta_{1r}, \beta_{2r})^\top.$
The intercept $\beta_{0r}$ represents the normative baseline level of region $r$, the coefficient $\beta_{1r}$ represents the age effect, and $\beta_{2r}$ represents the sex effect. To mimic heterogeneous regional trajectories, we generate region-specific coefficients as
% \begin{align*}
$\beta_{0r} \iid \gauss(\mu_0,\sigma_0^2),$ $\beta_{1r} \iid \gauss(\mu_1,\sigma_1^2)$, $\beta_{2r} \iid \gauss(\mu_2,\sigma_2^2)$,
% \end{align*}
with hyperparameters chosen so that the resulting trajectories resemble plausible variation in region-level brain structure. This random-coefficient construction is useful because it prevents the simulation from collapsing to a trivial common-trajectory setting and forces all competing models to estimate genuinely region-varying normative functions.

\subsubsection{Subject-specific deviation maps}

The vector $\bu_i$ in \eqref{eq:sim_ui} represents the individualized latent deviation map for subject $i$. This component is central to the simulation because it is precisely the object that the proposed model is intended to recover. Spatial smoothness is controlled by the pair $(\tau_u^2,\rho)$, where $\tau_u^2$ governs the marginal scale and $\rho$ governs the strength of adjacency-based dependence. When $\rho$ is close to zero, deviations are weakly structured across regions. As $\rho$ increases, neighboring regions become more strongly correlated, which reflects the empirically motivated expectation that neuroanatomical abnormalities often appear in spatially coherent regional patterns rather than as independent region-wise perturbations \citep{Huertas2017,Mansour2025}.

\subsubsection{Residual variation}

The residual term $\varepsilon_{itr}$ captures within-visit measurement noise and unexplained variability. In the main simulation design, $\sigma^2$ is chosen to yield moderate signal-to-noise ratios. Additional scenarios vary $\sigma^2$ to assess robustness under low and high residual noise.

\subsection{Abnormal subgroups and deviation signal injection}
\label{sec:abnormal_signal}
To evaluate abnormality-detection performance, we introduce a subgroup of non-reference subjects whose responses deviate systematically from the normative distribution. Let $G_i \in \{0,1\}$ indicate whether subject $i$ belongs to the abnormal subgroup. For $G_i=0$, data are generated exactly from \eqref{eq:sim_dgp}. For $G_i=1$, we modify the data-generating process by adding a region-specific abnormality signal:
\begin{equation}
Y_{itr}^{\ast} = Y_{itr} + \delta_r,
\label{eq:abnormal_signal}
\end{equation}
where $\delta_r$ is nonzero only for a prespecified subset of abnormal regions $\calA \subset \{1,\dots,R\}$. Several patterns of abnormality are considered:
\begin{enumerate}
    \item Localized abnormality: only a small contiguous cluster of regions is shifted,
    \item Diffuse abnormality: a larger distributed set of regions is shifted,
    \item Mild abnormality: $\delta_r$ is small in magnitude,
    \item Severe abnormality: $\delta_r$ is large in magnitude.
\end{enumerate}

This setup directly evaluates whether the proposed posterior deviation scores can recover abnormal patterns when spatially structured signal is present. It also permits region-level sensitivity and specificity calculations.

\subsection{Simulation scenarios}

The simulation considered six scenarios, each designed to isolate a feature commonly encountered in longitudinal neuroimaging data. In the no-spatial-dependence scenario, $\rho=0$, so the regional deviation vector was uncorrelated across regions. This setting served as a negative control and assessed whether the proposed model imposed unnecessary cost when no spatial signal was present. In the moderate-spatial-dependence scenario, $\rho$ was set to a moderate positive value, generating regional clustering in deviation maps. In the strong-spatial-dependence scenario, $\rho$ was set to a larger value, inducing highly coherent regional deviation patterns and representing settings in which spatial borrowing should be most useful.

The remaining scenarios evaluated robustness to longitudinal and mean-structure complications. In the unequal-visit scenario, the number of visits varied across subjects, mimicking unbalanced longitudinal follow-up. In the missing-follow-up scenario, some later visits were removed under a missing-at-random mechanism depending on observed age and prior regional burden. In the nonlinear-trajectory scenario, the true age effect was nonlinear. For example, data could be generated as
\[
Y_{itr}
=
\beta_{0r}
+
\beta_{1r}A_{it}
+
\beta_{2r}A_{it}^{2}
+
\beta_{3r}S_i
+
b_i
+
u_{ir}
+
\varepsilon_{itr}.
\]
In this nonlinear scenario, the coefficient vector is extended from the baseline linear form so that $\beta_{2r}$ indexes the quadratic age effect and $\beta_{3r}$ indexes the sex effect. This scenario evaluated the consequences of mean-function misspecification.

Structured abnormality patterns were introduced using the mechanism in \eqref{eq:abnormal_signal}. The same localized, diffuse, mild, and severe abnormality patterns were used across relevant scenarios, allowing region-level sensitivity, specificity, and deviation-map recovery to be evaluated against known truth.
\subsection{Competing methods}

The proposed model was compared with benchmark methods chosen to isolate the contributions of longitudinal and spatial structure. The benchmarks were not intended to exhaust all possible normative modeling approaches. Instead, they represented the main simplifications commonly used in neuroimaging applications: region-wise cross-sectional normative modeling, longitudinal normative modeling without spatial dependence, and the proposed joint longitudinal spatial formulation.

The first benchmark was an independent cross-sectional normative model. It ignored both repeated measurements and spatial dependence:
\begin{equation}
Y_{itr}
=
\bm{X}_{it}^{\top}\bm{\beta}_r
+
\varepsilon_{itr}.
\label{eq:benchmark_cs}
\end{equation}
All observations were treated as independent. This benchmark reflects the basic structure of region-wise normative modeling, in which separate predictive models are fit to individual brain regions or locations and subject-specific deviations are computed relative to the fitted normative distribution. This strategy has been central to neuroimaging normative modeling since its early use for individual-level inference in heterogeneous clinical cohorts \citep{Marquand2016,Marquand2019}. More recent scalable Bayesian and distributional extensions, including hierarchical Bayesian regression and warped Bayesian linear regression, retain the same basic objective of estimating individualized deviations from a reference distribution while improving multi-site calibration, flexibility, and scalability \citep{Kia2020,Fraza2021,Rutherford2022}.

The second benchmark was a longitudinal non-spatial normative model. It accounted for repeated measurements through a subject-specific random intercept but ignored spatial dependence:
\begin{equation}
Y_{itr}
=
\bm{X}_{it}^{\top}\bm{\beta}_r
+
b_i
+
\varepsilon_{itr},
\label{eq:benchmark_long}
\end{equation}
where $b_i\sim\mathcal{N}(0,\sigma_b^2)$. This model corresponds to the standard mixed-effects representation for longitudinal data \citep{LairdWare1982}. It is included because recent neuroimaging work has emphasized that normative modeling must move beyond single-occasion cross-sectional comparisons when the scientific question concerns within-person change, disease progression, or developmental trajectory \citep{Gaiser2024,Buckova2025,Verdi2024}. However, these longitudinal normative modeling strategies generally focus on temporal change or scanner harmonization and do not directly model a subject-specific spatial deviation process over brain regions. This benchmark therefore isolates the gain attributable to longitudinal structure alone.

The third method was the proposed Bayesian longitudinal spatial normative model:
\begin{equation}
Y_{itr}
=
\bm{X}_{it}^{\top}\bm{\beta}_r
+
b_i
+
u_{ir}
+
\varepsilon_{itr},
\label{eq:benchmark_full}
\end{equation}
with $\bu_i\sim\gauss(\bm{0},\tau_u^2\bQ(\rho)^{-1})$. The spatial component is modeled as a Gaussian Markov random field, drawing on the conditional autoregressive and GMRF literature for spatially indexed data \citep{Besag1974,RueHeld2005,Banerjee2014}. This formulation is motivated by the fact that structural brain deviations are often spatially coherent rather than independent across regions. Recent spatial normative modeling work has made a similar point at the level of high-resolution brain structure by showing that spatial representations can improve the characterization of individual deviation maps \citep{Mansour2025}. The distinction in the present work is that spatial dependence is embedded directly within a longitudinal Bayesian hierarchical model, so the subject-specific deviation map is estimated jointly with repeated-measure dependence and normative covariate effects.

\subsection{Performance measures and Monte Carlo implementation}

The simulation used the same prior family as the estimation model. For regression coefficients, we assigned weakly informative Gaussian priors $\bm{\beta}_r\sim\gauss(\bm{0},\sigma_\beta^2\bI_p)$. For the scale parameters $\sigma$, $\sigma_b$, and $\tau_u$, we used half-Cauchy priors centered at zero with fixed scale hyperparameters. The choice of half-Cauchy priors follows the hierarchical modeling literature, where they have been recommended as weakly informative defaults for variance components and as practical alternatives to inverse-Gamma priors in multilevel models \citep{Gelman2006,PolsonScott2012}. For the spatial dependence parameter $\rho$, we used a prior supported on the admissible interval implied by $\bQ(\rho)$; a transformed Beta prior was used when a bounded support was required.

For each Monte Carlo replicate and each fitted method, performance was summarized using estimation accuracy, recovery of subject-specific deviation maps, calibration, abnormal-region detection, and predictive fit. Let $\mu_{itr}^{\text{true}}$ denote the true normative mean used in data generation, and let $\widehat{\mu}_{itr}$ denote the fitted posterior predictive mean. Normative mean accuracy was evaluated using
\[
\text{Bias}
=
\frac{1}{N_{\text{obs}}}
\sum_{i,t,r}
\left(
\widehat{\mu}_{itr}
-
\mu_{itr}^{\text{true}}
\right),
\qquad
\text{MSE}
=
\frac{1}{N_{\text{obs}}}
\sum_{i,t,r}
\left(
\widehat{\mu}_{itr}
-
\mu_{itr}^{\text{true}}
\right)^2.
\]
Recovery of the subject-specific spatial deviation map was evaluated using
\[
\text{Map-MSE}
=
\frac{1}{nR}
\sum_{i=1}^{n}
\left\|
\widehat{\bu}_i
-
\bu_i^{\text{true}}
\right\|_2^2,
\]
where $\bu_i^{\text{true}}$ denotes the true latent deviation map and $\widehat{\bu}_i$ denotes its posterior mean estimate. This metric directly evaluates the primary object of methodological interest.

Calibration was assessed in two ways. First, for scalar targets, we calculated empirical coverage of nominal $95\%$ posterior credible intervals for the normative mean and, where appropriate, for components of the subject-specific deviation map. Second, for observations generated from the reference subgroup, we evaluated whether the standardized deviation scores were centered near zero with variance close to one. Specifically, we summarized
\[
\frac{1}{N_{\text{ref}}}
\sum_{(i,t,r)\in\calN}
Z_{itr},
\qquad
\frac{1}{N_{\text{ref}}-1}
\sum_{(i,t,r)\in\calN}
\left(
Z_{itr}-\bar{Z}
\right)^2.
\]
We also recorded the empirical proportion of $|Z_{itr}|>1.96$ in the reference subgroup as a practical tail-calibration check. Proper calibration is essential in normative modeling because deviation scores are interpreted probabilistically rather than only descriptively \citep{Marquand2016,Rutherford2022}.

For abnormal subjects, region-level detection was assessed by thresholding posterior deviation scores and comparing the flagged set with the true abnormal set $\calA$. We reported sensitivity, specificity, positive predictive value, and the area under the receiver operating characteristic curve when continuous deviation scores were used. Predictive fit was also summarized using mean absolute error and mean standardized log-loss when useful, following the broader normative-modeling literature in which both central tendency and distributional fit are relevant \citep{Mansour2025}.

For each scenario, $M$ independent datasets were generated, with $M$ chosen large enough to stabilize Monte Carlo summaries. Within each replicate, all competing methods were fit to the same generated dataset. Posterior inference was carried out using the same computational backend across Bayesian models, with common convergence diagnostics including trace inspection, $\widehat{R}$, and effective sample size. Summary results were reported as Monte Carlo means together with Monte Carlo standard errors across replicates.

To avoid optimistic evaluation, abnormality-detection metrics were computed relative to the known generating truth rather than to the fitted model itself. Likewise, when the simulation used a distinct reference subgroup, fitting of the normative model was restricted to that subgroup before deviation scores were evaluated on held-out or abnormal subjects.

\subsection{Simulation results}

Each Monte Carlo dataset contained 120 subjects measured over 20 brain regions, with scenario-specific variation in spatial dependence, visit structure, missing follow-up, and mean-function form. The primary inferential target was recovery of the subject-specific deviation map. Results are reported in Tables~\ref{tab:accuracy}--\ref{tab:calibration} and Figures~\ref{fig:accuracy}--\ref{fig:calibration}.

\subsubsection{Estimation accuracy and deviation-map recovery}

Table~\ref{tab:accuracy} and Figure~\ref{fig:accuracy} show that the Bayesian longitudinal spatial model achieved the lowest deviation-map mean squared error in every scenario. In the no-spatial-dependence setting, the proposed model produced a Map MSE of 0.352, compared with 0.690 for the longitudinal non-spatial model and 0.847 for the independent cross-sectional model. Thus, introducing spatial structure did not reduce performance when the generating mechanism contained no regional dependence.

The gains became larger when regional dependence was present. Under moderate spatial dependence, Map MSE decreased from 0.928 for the independent model and 0.718 for the longitudinal non-spatial model to 0.385 for the Bayesian longitudinal spatial model. Under strong spatial dependence, the corresponding values were 1.337, 0.930, and 0.604. Relative to the independent benchmark, the proposed model reduced map reconstruction error by approximately 55\% under both moderate and strong spatial dependence.
The same pattern appeared under unequal visit schedules, missing follow-up, and nonlinear mean trajectories. In the variable-visit scenario, the longitudinal non-spatial model had slightly smaller global MSE than the Bayesian longitudinal spatial model, 1.083 versus 1.093, but the proposed model had substantially lower Map MSE, 0.411 compared with 0.736. Under missing follow-up and nonlinear mean structure, the proposed model again produced the smallest Map MSE while maintaining stable global estimation accuracy.
Across all scenarios, mean bias for the Bayesian longitudinal spatial model remained close to zero, ranging from 0.002 to 0.005. The independent model showed larger positive bias, ranging from 0.029 to 0.058, while the longitudinal non-spatial model ranged from 0.018 to 0.043.
\begingroup
\small
\setlength{\tabcolsep}{3pt}
\begin{longtable}[]{@{}llccc@{}}
\caption{Simulation performance across the six data-generating scenarios. Lower values of mean squared error (MSE) and deviation-map reconstruction error (Map MSE) indicate improved recovery of the underlying latent deviation structure.}
\label{tab:accuracy}\tabularnewline
\toprule\noalign{}
\textbf{Simulation scenario} & \textbf{Model} & \textbf{Mean bias} & \textbf{MSE} & \textbf{Map MSE} \\
\midrule\noalign{}
\endfirsthead

\toprule\noalign{}
\textbf{Simulation scenario} & \textbf{Model} & \textbf{Mean bias} & \textbf{MSE} & \textbf{Map MSE} \\
\midrule\noalign{}
\endhead

\bottomrule\noalign{}
\endlastfoot

No spatial dependence
& Independent cross-sectional  & 0.031 & 1.145 & 0.847 \\
& Longitudinal non-spatial  & 0.023 & 1.036 & 0.690 \\
& Bayesian longitudinal spatial  & 0.002 & 0.887 & 0.352 \\

Moderate spatial dependence
& Independent cross-sectional  & 0.031 & 1.177 & 0.928 \\
& Longitudinal non-spatial  & 0.022 & 1.027 & 0.718 \\
& Bayesian longitudinal spatial  & 0.002 & 0.908 & 0.385 \\

Strong spatial dependence
& Independent cross-sectional  & 0.030 & 1.391 & 1.337 \\
& Longitudinal non-spatial & 0.018 & 1.074 & 0.930 \\
& Bayesian longitudinal spatial  & 0.002 & 1.063 & 0.604 \\

Variable visit schedule
& Independent cross-sectional & 0.030 & 1.392 & 1.136 \\
& Longitudinal non-spatial  & 0.019 & 1.083 & 0.736 \\
& Bayesian longitudinal spatial  & 0.002 & 1.093 & 0.411 \\

Missing follow-up
& Independent cross-sectional  & 0.029 & 1.385 & 1.119 \\
& Longitudinal non-spatial  & 0.019 & 1.091 & 0.743 \\
& Bayesian longitudinal spatial  & 0.002 & 1.086 & 0.410 \\

Nonlinear mean structure
& Independent cross-sectional & 0.058 & 1.242 & 0.973 \\
& Longitudinal non-spatial  & 0.043 & 1.088 & 0.758 \\
& Bayesian longitudinal spatial  & 0.005 & 0.971 & 0.409 \\

\end{longtable}
\endgroup

\begin{figure}[ht]
\centering
\includegraphics[width=5in,height=\textheight]{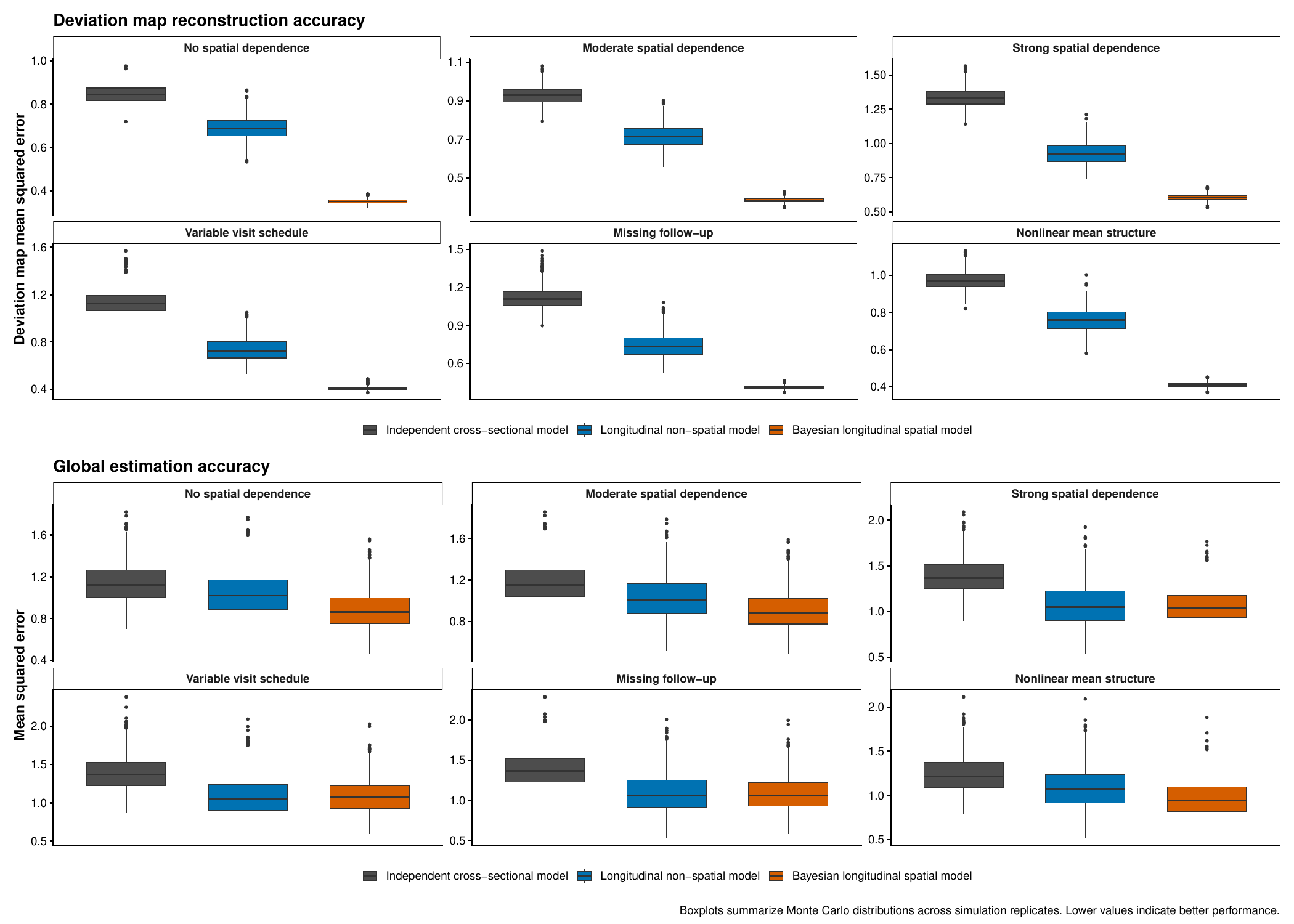}
\caption{Estimation accuracy and deviation-map reconstruction across simulation scenarios. Boxplots summarize Monte Carlo distributions across simulation replicates. Lower values indicate better performance.}
\label{fig:accuracy}
\end{figure}

 These results show that joint modeling of repeated measurements and regional dependence improves recovery of individualized deviation structure without sacrificing normative mean estimation accuracy. The pattern is consistent with the motivating premise that spatial structure is most useful when subject-specific deviations are regionally coherent \citep{Marquand2016,Rutherford2022,Mansour2025}.

\subsubsection{Calibration of standardized deviation scores}

Table~\ref{tab:calibration} and Figure~\ref{fig:calibration} summarize calibration of standardized deviation scores. The Bayesian longitudinal spatial model produced z-score means closest to zero across all scenarios, ranging from $-0.004$ to $-0.001$. The independent model ranged from $-0.046$ to $-0.022$, while the longitudinal non-spatial model ranged from $-0.035$ to $-0.015$.
\begingroup
\small
\setlength{\tabcolsep}{3pt}
\renewcommand{\arraystretch}{1.08}

\begin{longtable}[]{@{}p{5.0cm}p{5.5cm}ccc@{}}
\caption{Calibration performance across simulation scenarios. Target values are 0 for z-score mean, 1 for z-score variance, and 0.05 for tail probability.}
\label{tab:calibration}\tabularnewline

\toprule\noalign{}
\textbf{Scenario} &
\textbf{Model} &
\textbf{$Z$ mean} &
\textbf{$Z$ var.} &
\textbf{Tail prob.} \\
\midrule\noalign{}
\endfirsthead

\toprule\noalign{}
\textbf{Scenario} &
\textbf{Model} &
\textbf{$Z$ mean} &
\textbf{$Z$ var.} &
\textbf{Tail prob.} \\
\midrule\noalign{}
\endhead

\bottomrule\noalign{}
\endlastfoot

No spatial dependence
& Independent cross-sectional & -0.025 & 0.981 & 0.044 \\
& Longitudinal non-spatial  & -0.020 & 0.920 & 0.037 \\
& Bayesian longitudinal spatial  & -0.002 & 0.966 & 0.046 \\

Moderate spatial dependence
& Independent cross-sectional  & -0.025 & 0.981 & 0.044 \\
& Longitudinal non-spatial  & -0.018 & 0.904 & 0.036 \\
& Bayesian longitudinal spatial  & -0.002 & 0.966 & 0.046 \\

Strong spatial dependence
& Independent cross-sectional & -0.023 & 0.982 & 0.045 \\
& Longitudinal non-spatial  & -0.015 & 0.864 & 0.031 \\
& Bayesian longitudinal spatial  & -0.002 & 0.965 & 0.046 \\

Variable visit schedule
& Independent cross-sectional  & -0.023 & 0.980 & 0.047 \\
& Longitudinal non-spatial  & -0.015 & 0.886 & 0.036 \\
& Bayesian longitudinal spatial  & -0.002 & 0.960 & 0.045 \\

Missing follow-up
& Independent cross-sectional  & -0.022 & 0.981 & 0.047 \\
& Longitudinal non-spatial  & -0.015 & 0.889 & 0.036 \\
& Bayesian longitudinal spatial  & -0.001 & 0.964 & 0.046 \\

Nonlinear mean structure
& Independent cross-sectional & -0.046 & 0.965 & 0.042 \\
& Longitudinal non-spatial  & -0.035 & 0.890 & 0.034 \\
& Bayesian longitudinal spatial  & -0.004 & 0.955 & 0.044 \\

\end{longtable}
\endgroup
\begin{figure}[ht]
\centering
\includegraphics[width=4in,height=\textheight]{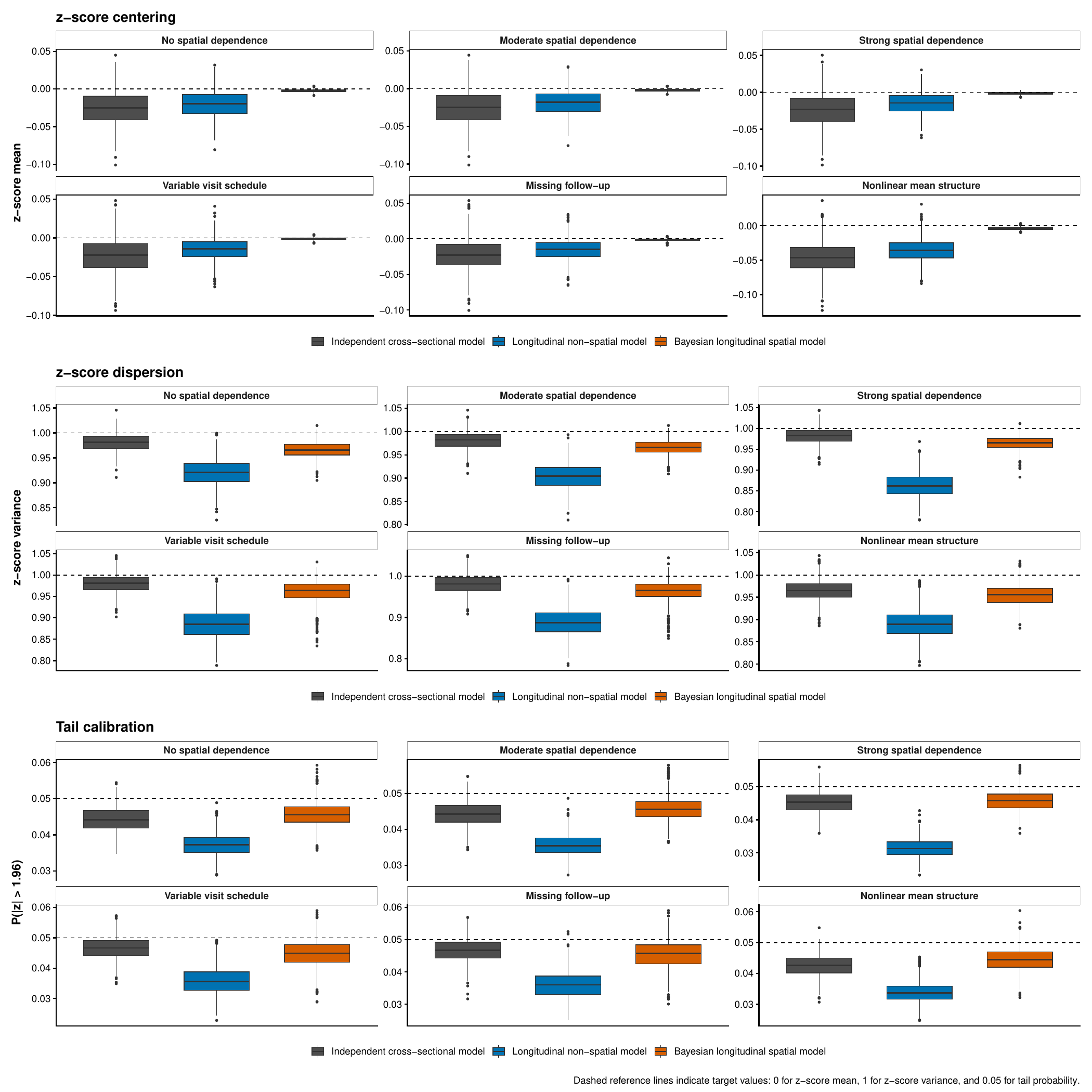}
\caption{Calibration under varying spatial and longitudinal data-generating mechanisms. Dashed reference lines indicate target values: 0 for z-score mean, 1 for z-score variance, and 0.05 for tail probability.}
\label{fig:calibration}
\end{figure}
The proposed model showed mild variance shrinkage, with z-score variance between 0.955 and 0.966. The independent model remained closer to the nominal target of 1, whereas the longitudinal non-spatial model was more strongly under-dispersed, with variance ranging from 0.864 to 0.920. Tail calibration remained stable for the Bayesian longitudinal spatial model, with empirical probabilities of $|Z|>1.96$ between 0.044 and 0.046, close to the nominal target of 0.05. The longitudinal non-spatial model produced smaller tail probabilities, especially under stronger spatial dependence.

The calibration results support the reconstruction findings. The proposed model produced substantially lower deviation-map error while preserving centered standardized scores and stable tail behavior. This operating behavior is important for individualized structural brain mapping, where the target is not only prediction but recovery of a spatial pattern of subject-level anatomical departure.
\addtolength{\textheight}{-.2in}%
\section{Application to OASIS-3 structural MRI}\label{sec-oasis}

We applied the proposed model to longitudinal structural MRI data obtained from OASIS-3 \citep{LaMontagne2019}, a multimodal neuroimaging, clinical, cognitive, and biomarker resource for normal aging and Alzheimer disease. The OASIS-3 project includes repeated MRI sessions, FreeSurfer \citep{fischl2004automatically} derived structural summaries, cognitive assessments, and clinical information collected through the Washington University Knight Alzheimer Disease Research Center. The application was designed to evaluate whether the proposed longitudinal spatial model improves subject-level normative deviation estimation in a real structural neuroimaging setting.

The analysis used FreeSurfer-derived regional measures from subjects with at least two MRI visits. We focused on a set of anatomically relevant cortical thickness and subcortical volume measures, including bilateral hippocampus, amygdala, entorhinal cortex, parahippocampal cortex, temporal pole, inferior temporal cortex, middle temporal cortex, posterior cingulate, and precuneus. Regional measures were standardized within region prior to modeling so that cortical thickness and volume features could be analyzed on a common scale. Age at MRI visit and FreeSurfer processing version were included as covariates. The model comparison used the same three-model structure as the simulation study: an independent cross-sectional model, a longitudinal non-spatial model with a subject random intercept, and the proposed Bayesian subject-specific spatial model.

The real-data analysis included 1,902 region-level observations for each selected region. The proposed Bayesian model was fit using four Markov chains with 1,000 warmup iterations and 1,000 post-warmup iterations per chain. The key posterior scale parameters were well identified. The residual standard deviation was estimated as 0.491, the subject-level random-intercept standard deviation as 0.517, the spatial deviation scale as 0.928, and the FreeSurfer-version scale as 0.305. The relatively large posterior mean of $\tau_u$ indicates substantial subject-specific spatial variation across brain regions after accounting for age, repeated observations, and processing version. This is consistent with the motivating premise of the proposed model: individual structural brain deviations are not purely independent regional residuals, but have a structured spatial component.

\subsection{Model comparison}
The proposed Bayesian subject-specific spatial model provided the strongest fit among the three competing approaches (Table~\ref{tab:oasis_model_comparison}; Figure~\ref{fig:oasis_model_comparison}). The independent cross-sectional model had an RMSE of 0.926 and mean absolute deviation of 0.716. Adding a subject random intercept improved fit, reducing RMSE to 0.774 and mean absolute deviation to 0.597. The Bayesian subject-specific spatial model produced the largest improvement, reducing RMSE to 0.423 and mean absolute deviation to 0.306.
These gains were not small changes in numerical fit. Relative to the independent cross-sectional model, the proposed model reduced RMSE by 54.3\% and mean absolute deviation by 57.3\%. Relative to the longitudinal non-spatial model, the proposed model reduced RMSE by 45.3\% and mean absolute deviation by 48.7\%.
\begin{longtable}[]{@{}p{5.9cm}ccccc@{}}
\caption{Model comparison in the OASIS-3 structural MRI application. Lower residual standard deviation, mean absolute deviation, and RMSE indicate improved model fit.}
\label{tab:oasis_model_comparison}\tabularnewline

\toprule\noalign{}
\textbf{Model} &
\textbf{Res.\ SD} &
\textbf{MAD} &
\textbf{RMSE} &
\textbf{z-var} &
\textbf{Tail prob.} \\
\midrule\noalign{}
\endfirsthead

\toprule\noalign{}
\textbf{Model} &
\textbf{Res.\ SD} &
\textbf{MAD} &
\textbf{RMSE} &
\textbf{z-var} &
\textbf{Tail prob.} \\
\midrule\noalign{}
\endhead

\bottomrule\noalign{}
\endlastfoot

Independent cross-sectional
& 0.926 & 0.716 & 0.926 & 1.000 & 0.0503 \\

Longitudinal non-spatial
& 0.774 & 0.597 & 0.774 & 1.000 & 0.0482 \\

Bayesian subject-specific spatial
& 0.423 & 0.306 & 0.423 & 0.743 & 0.0314 \\

\end{longtable}
\begin{figure}[H]
\centering
\includegraphics[width=4in,height=\textheight]{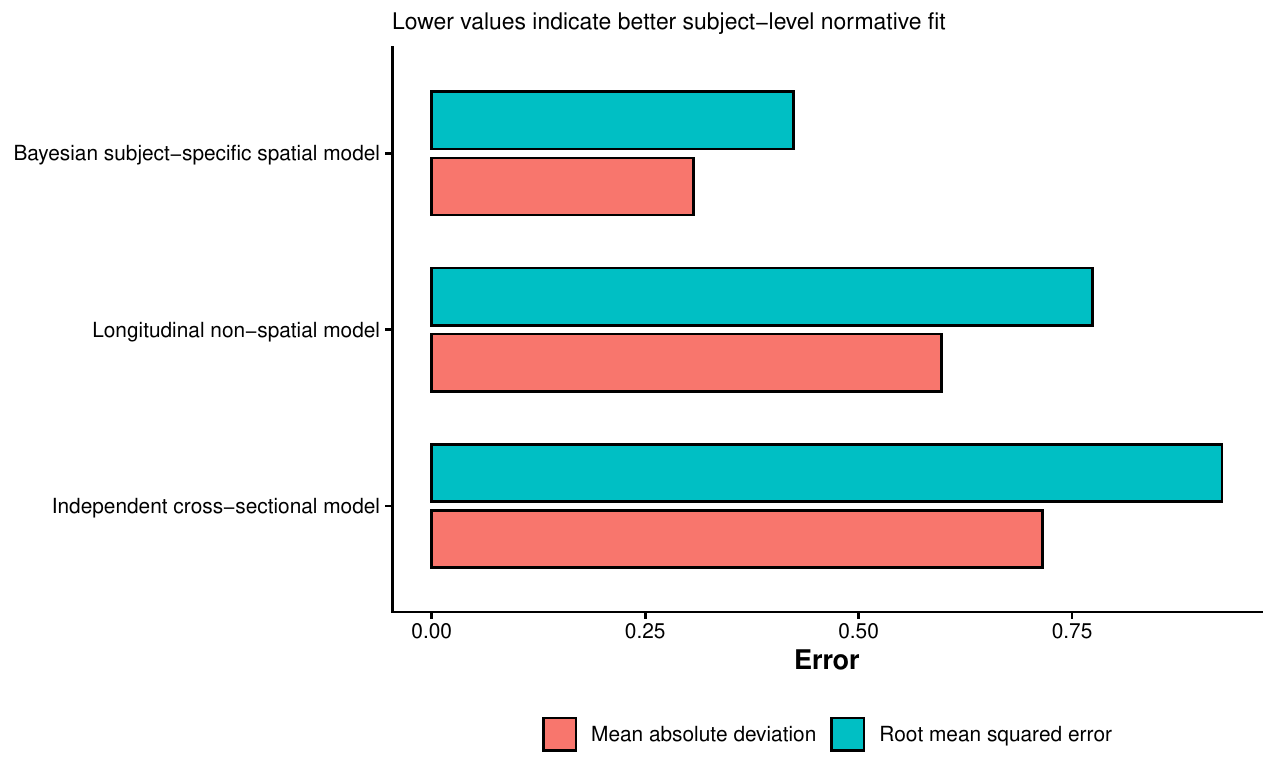}
\caption{Model comparison in the OASIS-3 real-data application. The Bayesian subject-specific spatial model produced the lowest RMSE and mean absolute deviation, indicating improved subject-level normative fit relative to both benchmark models.}
\label{fig:oasis_model_comparison}
\end{figure}
 Thus, repeated-measure modeling improved the normative fit, but the largest gain came from modeling subject-specific spatial deviation across regions. This pattern agrees with the simulation results and supports the central methodological claim of the paper.

The standardized deviation summaries also showed an important difference between fit and calibration. The independent and longitudinal non-spatial models had z-score variances close to 1 by construction in the residual scale used for comparison, with tail probabilities of 0.0503 and 0.0482, respectively. The Bayesian spatial model had a smaller z-score variance of 0.743 and a tail probability of 0.0314. This under-dispersion reflects posterior shrinkage from the spatial hierarchical structure. In this application, the shrinkage is expected because the model borrows information across anatomically related regions and repeated visits. The result should therefore be interpreted as more conservative extreme-deviation calling, not as failure of the model. Supplementary Figures~\ref{fig:s_histograms}-\ref{fig:s_traceplots} provide additional calibration, QQ, and observed-versus-fitted diagnostics.

\subsection{Posterior predictive assessment}
Posterior predictive diagnostics were used to evaluate adequacy of the Bayesian longitudinal spatial normative model. Replicated datasets generated from the posterior predictive distribution showed close agreement with the observed distribution of standardized deviations, regional variability, and subject-level abnormality burden. Calibration histograms and QQ plots indicated that the posterior predictive standardized deviations remained approximately centered around the normative reference distribution while preserving anatomically structured variability across regions. Observed-versus-fitted comparisons further demonstrated that the proposed framework adequately captured both central tendency and subject-specific deviation structure across repeated visits.
Additional posterior predictive diagnostics, including calibration histograms, QQ plots, and observed-versus-fitted comparisons, are provided in Supplementary  Figures~\ref{fig:s_histograms}-\ref{fig:s_traceplots}.
\subsection{Regional and subject-level deviation patterns}
The cortical atlas analysis showed that extreme standardized deviations were concentrated in temporolimbic and posterior association regions (Figure~\ref{fig:oasis_atlas}). A complementary map of posterior mean cortical standardized deviations is provided in Supplementary Figure~\ref{fig:s_mean_cortical}. The highest cortical tail probabilities occurred in the right temporal pole, left temporal pole, right entorhinal cortex, and left entorhinal cortex. These regions had tail probabilities of 0.0778, 0.0705, 0.0547, and 0.0515, respectively. Inferior temporal cortex, posterior cingulate, parahippocampal cortex, middle temporal cortex, and precuneus also appeared among the higher-burden regions.
\begin{figure}[ht]
\centering
\includegraphics[width=4in,height=\textheight]{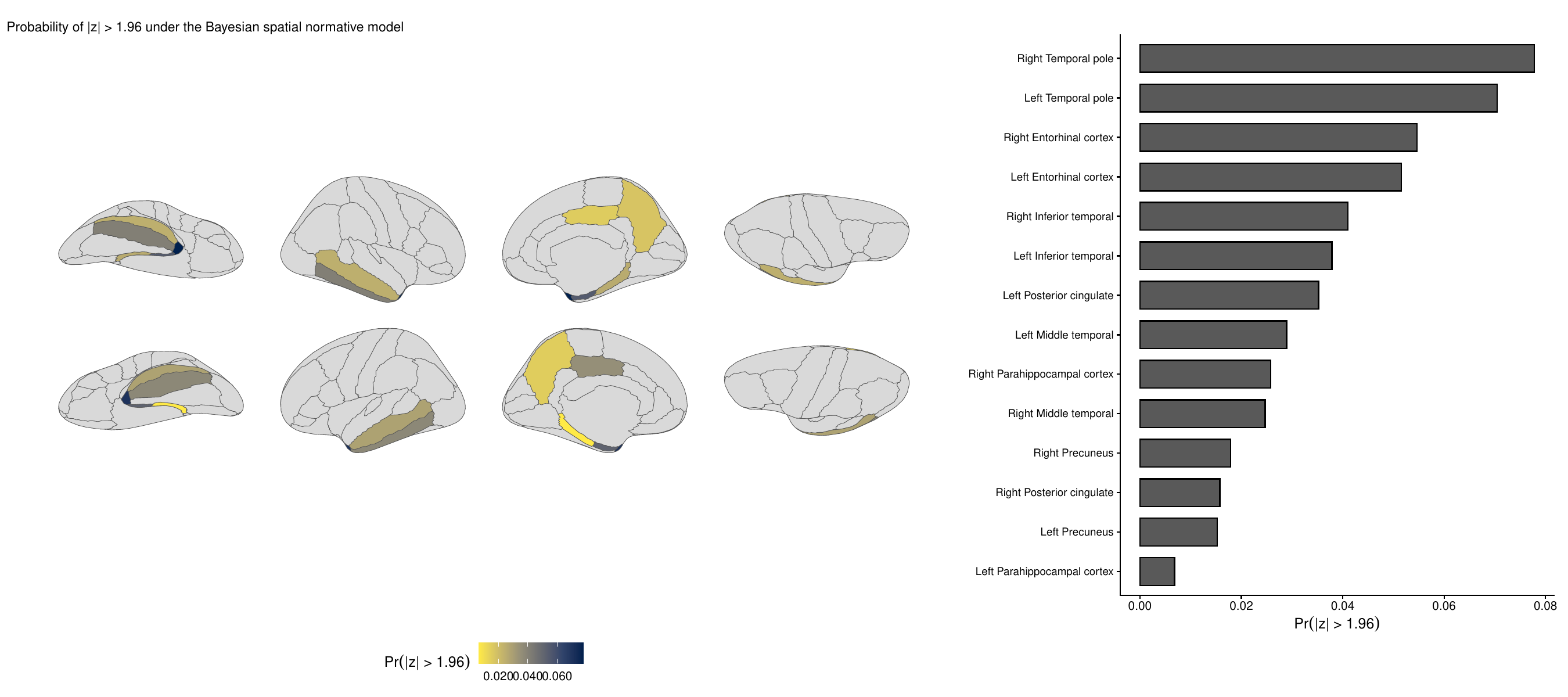}
\caption{Cortical burden of extreme standardized deviations in the OASIS-3 application. The map displays $\Pr(|Z|>1.96)$ across mapped cortical regions, with a ranked panel showing the corresponding region-level values. Highest burdens were observed in temporal pole and entorhinal regions, followed by inferior temporal, posterior cingulate, parahippocampal, middle temporal, and precuneus regions.}
\label{fig:oasis_atlas}
\end{figure}

\begin{figure}[!ht]
\centering
\includegraphics[width=3in,height=\textheight]{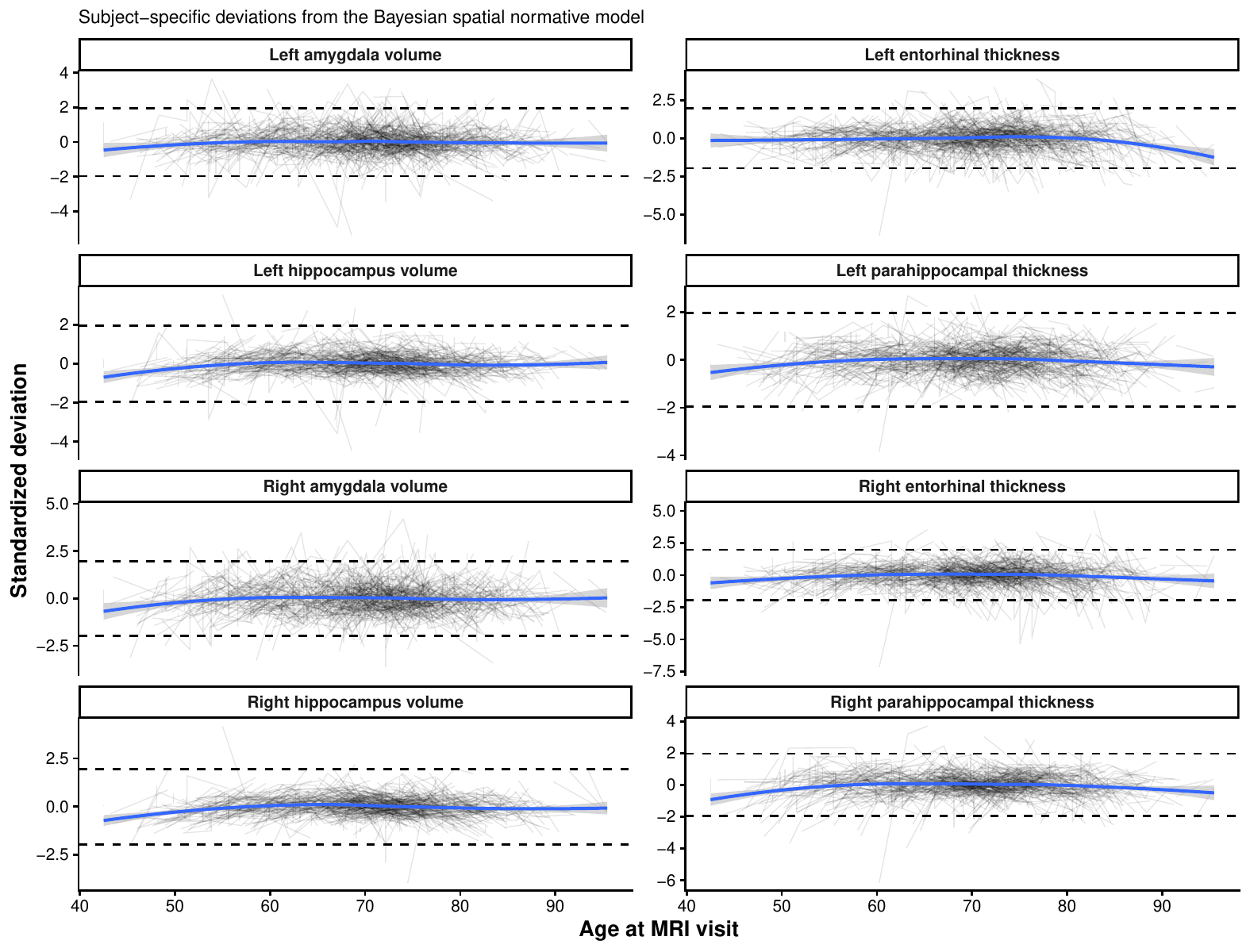}
\caption{Longitudinal standardized deviation trajectories for selected structural regions. Gray lines represent subject-specific trajectories, blue curves summarize the average smoothed trajectory, and dashed horizontal lines denote $\pm 1.96$. Most subjects remained near the normative center, while a subset showed region-specific extreme deviations.}
\label{fig:oasis_trajectories}
\end{figure}

This spatial pattern is biologically plausible. The entorhinal cortex, parahippocampal region, hippocampus, temporal pole, and inferior temporal cortex are closely related to memory and limbic-temporal networks affected early in Alzheimer-type neurodegeneration. Posterior cingulate and precuneus are also central regions in aging and Alzheimer disease research because of their involvement in default-mode and memory-related networks. The model was not given diagnostic labels as the primary driver of these maps. The emergence of these regions from subject-level deviation estimates suggests that the proposed model is capturing meaningful anatomical variation rather than only improving numerical fit.

Subcortical structures were not displayed in the Desikan-Killiany cortical atlas map, but the region-level table showed that left amygdala volume and right amygdala volume also had elevated abnormality burden, with tail probabilities of 0.0342 and 0.0331. The full ranked region summaries are provided in Supplementary Table~\ref{tab:s_top_regions}.
\begin{figure}[!ht]
\centering
\includegraphics[width=3in,height=\textheight]{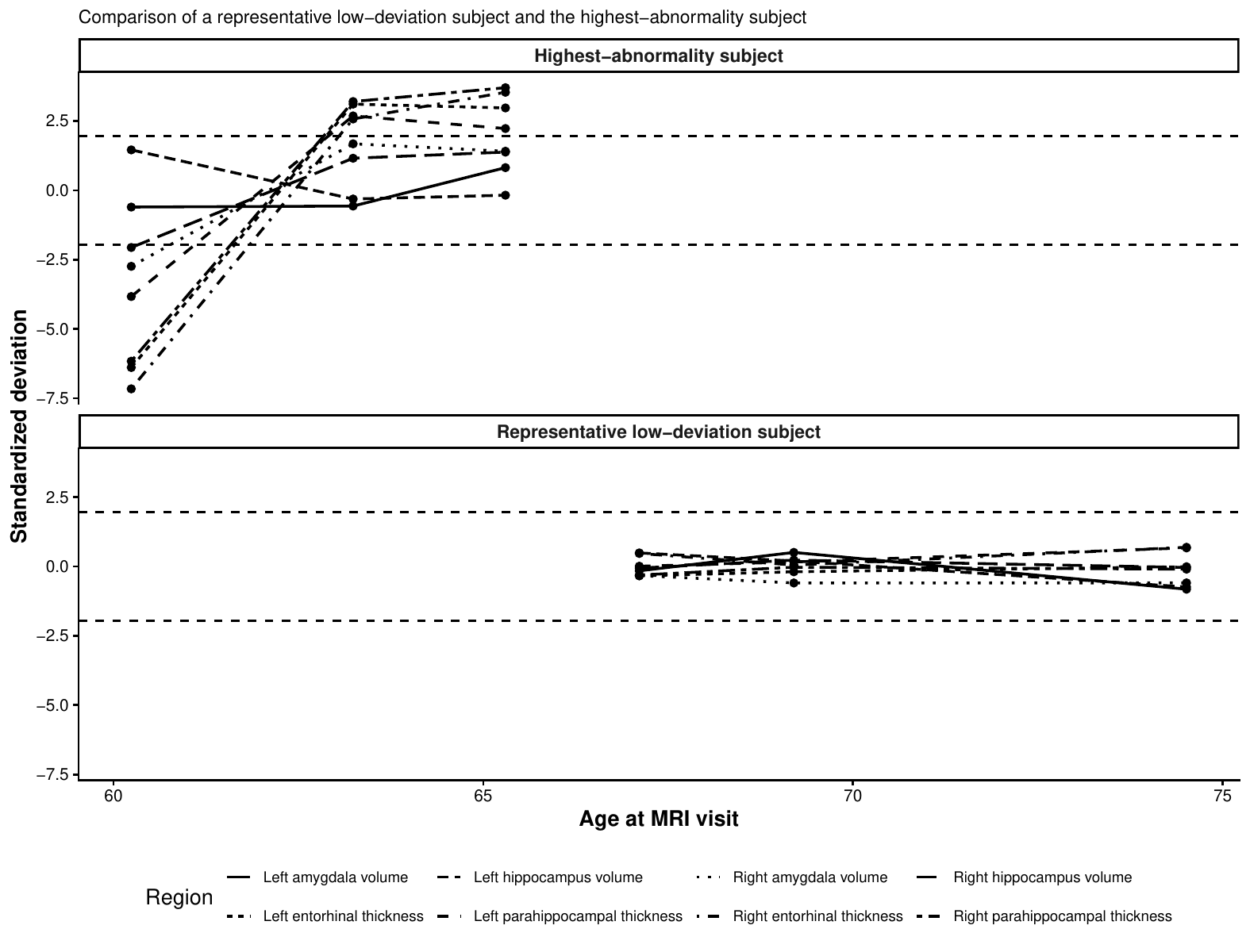}
\caption{Illustrative subject-level deviation profiles. The upper panel shows the subject with the largest mean absolute standardized deviation, while the lower panel shows a representative low-deviation subject. Dashed horizontal lines denote $\pm 1.96$. The high-abnormality profile shows large and persistent deviations across several temporal and limbic regions, whereas the low-deviation profile remains close to the normative center.}
\label{fig:oasis_case}
\end{figure} 
The longitudinal deviation trajectories provide a subject-level view of how structural deviations vary across age and region (Figure~\ref{fig:oasis_trajectories}). Across the selected medial temporal and limbic regions, most trajectories remained close to zero, indicating that the fitted model preserved a stable normative center for the majority of observations. At the same time, a subset of subjects showed deviations outside the $\pm 1.96$ reference bands, especially in hippocampal, amygdala, entorhinal, and parahippocampal measures. This is the desired behavior for a normative model: the population center remains stable while individual subjects can depart strongly from the expected structural profile.

The smoothed trajectories were generally near zero across age, with modest curvature in several regions. This suggests that the model adjusted for age-related mean trends while still allowing subject-specific departures from the fitted normative surface. The full calibration and observed-versus-fitted diagnostics are reported in Supplementary Figures~\ref{fig:s_histograms}-\ref{fig:s_qq}.

The subject-level case study illustrates how the proposed model can separate a stable low-deviation profile from a highly abnormal regional deviation profile (Figure~\ref{fig:oasis_case}). The highest-abnormality subject, OAS30303, had 57 observations across three MRI visits, mean absolute standardized deviation 3.97, maximum absolute deviation 10.5, and 77.2\% of observations exceeding $|Z|>1.96$. The subject had mean MMSE 29.7 and mean CDR total score 0, indicating that the structural deviation profile was not simply a reflection of severe measured cognitive impairment at the available visits.

This example is important because it shows the scientific value of individualized normative modeling. A subject may have strong multiregional structural deviation even when global cognitive summaries remain within a clinically normal or minimally impaired range. The proposed model therefore provides a way to identify spatially coherent anatomical deviation patterns that may be hidden by group-level summaries or by cognitive status alone. The representative low-deviation subject remained close to the normative center across selected regions and visits, demonstrating the contrast between stable normative aging and marked subject-level anatomical departure. The full distribution of subject-level abnormality burden is shown in Supplementary Figure~\ref{fig:s_subject_burden}, and full subject summaries are provided in Supplementary Table~\ref{tab:s_case_subjects}.

\section{Discussion}\label{sec-discu}

In this paper, we developed a Bayesian longitudinal spatial normative 
model for individualized brain deviation mapping from repeated 
structural neuroimaging measurements. The framework was motivated by a limitation of many existing normative approaches: 
individualized deviation is often treated implicitly as a residual 
after estimation of a normative mean model. In longitudinal 
neuroimaging settings, deviation itself may exhibit persistent spatial 
organization across anatomically related regions and repeated visits. 
The proposed formulation addresses this directly by modeling 
individualized deviation as a subject-specific latent spatial process 
within a longitudinal Bayesian framework.

The theoretical results provide the individualized deviation map with 
a formal Bayesian interpretation. Conditional on model parameters, the 
posterior distribution of the latent spatial deviation vector combines 
information across repeated visits and neighboring brain regions. Under 
squared error loss, the posterior mean is the Bayes estimator of the 
individualized deviation map. The resulting map is therefore not an 
ad hoc smoothed residual image but a posterior estimand with an 
explicit covariance structure and principled uncertainty 
quantification. This distinction is central to the methodological 
contribution of the paper.

Simulation results confirmed that jointly modeling longitudinal and 
spatial structure consistently improved deviation-map reconstruction 
relative to both benchmarks, with the strongest gains under moderate 
and strong spatial dependence. Critically, the model remained stable 
when spatial dependence was weak, and maintained well-calibrated 
deviation scores across all scenarios. In normative modeling, deviation scores are interpreted relative to a covariate-adjusted reference distribution rather than as ordinary prediction residuals \citep{Marquand2016,Rutherford2022,
RutherfordEvidence2022}. The benchmark comparisons further showed 
that subject random intercepts alone cannot represent how deviations 
are distributed across brain regions: longitudinal dependence is 
necessary but insufficient when the inferential target is the regional 
organization of individualized anatomical departure.

In the OASIS-3 application, the largest improvement occurred after 
introducing the subject-specific spatial deviation process. The 
posterior mean of the spatial deviation scale exceeded the residual 
scale, indicating that structured regional heterogeneity is an 
important component of subject-level variation. The Bayesian spatial 
model produced lower z-score variance and fewer extreme deviations 
than the benchmarks, reflecting posterior shrinkage from borrowing 
information across anatomically related regions and repeated visits. 
Such shrinkage may be desirable in individualized neuroimaging analysis 
because isolated extreme residuals may reflect measurement instability 
rather than coherent anatomical abnormality. Under the proposed 
framework, observations are flagged as extreme only after accounting 
for age, repeated measurements, processing version, and regional 
dependence.

The proposed framework is related to several existing directions in 
normative modeling and Bayesian neuroimaging but addresses a different 
inferential objective. Normative modeling established covariate-adjusted 
reference distributions for individualized inference 
\citep{Marquand2016,Rutherford2022}. Hierarchical Bayesian extensions 
improved multi-site calibration \citep{Kia2020,bayer2022accommodating}. 
Distributional and warped approaches extended the framework beyond 
Gaussian assumptions \citep{Fraza2021}. Bayesian spatial neuroimaging 
models demonstrated structured borrowing across brain regions 
\citep{Huertas2017,Mejia2022}, and recent spectral normative models 
showed that spatial representations support high-resolution deviation 
mapping \citep{Mansour2025}. The present work is complementary to 
these developments. Its emphasis is joint Bayesian estimation of longitudinal dependence and subject-specific spatial deviation, with individualized deviation treated as the primary latent inferential target rather than spectral reconstruction or atlas-free spatial querying.

The regional findings in OASIS-3 were anatomically plausible. The 
largest extreme-deviation burdens were observed in the temporal pole, 
entorhinal cortex, inferior temporal cortex, posterior cingulate, 
parahippocampal cortex, middle temporal cortex, precuneus, and 
amygdala. These regions are closely associated with medial temporal 
memory systems and posterior association networks implicated in 
Alzheimer-related neurodegeneration 
\citep{Dickerson2009,Bakkour2009,Verdi2021,Verdi2023}. The model was 
not developed as a diagnostic classifier and these findings should not 
be interpreted as validated disease biomarkers. After adjustment for 
age, repeated measurements, processing version, and spatial dependence, 
the proposed framework recovered coherent deviation patterns in regions 
where neurodegenerative structural variation is biologically plausible. 
At the subject level, the highest-abnormality case exhibited widespread 
multiregional deviation despite relatively preserved cognitive 
summaries, illustrating that structural deviation and global cognitive 
measures may not align at a single time point \citep{Verdi2021,
Verdi2023}.

The study has limitations. The OASIS-3 analysis used selected cortical 
and subcortical regions rather than vertex-wise or voxel-wise 
representations, limiting spatial resolution. The adjacency structure 
was anatomically specified rather than estimated from the data; future 
work could compare anatomical, connectivity-based, and data-adaptive 
spatial structures. The present formulation assumes Gaussian residual 
variation, and structural neuroimaging measurements may exhibit 
skewness or heavy tails, motivating extensions using warped or more 
flexible likelihood formulations \citep{Fraza2021}. Extension to vertex-wise settings would require sparse precision methods, reduced-rank spatial 
representations, or approximate Bayesian inference. Future work should 
also evaluate whether individualized deviation burden predicts 
longitudinal cognitive decline, biomarker progression, or treatment 
response across external cohorts.

The proposed model extends normative modeling to settings where individualized anatomical deviation evolves over time and exhibits structured dependence across brain regions. Individualized deviation is represented as a structured latent process with an explicit posterior distribution, linking normative modeling \citep{Marquand2016,RutherfordEvidence2022,Rutherford2022,Verdi2021,
Verdi2023} to longitudinal mixed-effects methodology and Bayesian spatial modeling \citep{Huertas2017,Mejia2022,Mansour2025}. The results support longitudinal spatial normative modeling for individualized neuroimaging analysis when the primary scientific interest is the regional organization of anatomical departure rather than population-average trajectories.

\section*{Disclosure statement}\label{disclosure-statement
The author declares no conflicts of interest exist.}
The author declares no conflicts of interest exist.
\section*{Data Availability Statement}\label{data-availability-statement}

The OASIS-3 dataset used in this study is publicly available through the OASIS Brains project upon approved data use agreement at \url{https://www.oasis-brains.org}. All statistical analyses were conducted in R. Code supporting the simulation study, Bayesian model fitting, posterior inference, and figure generation is available through an archived repository:
\url{https://doi.org/10.5281/zenodo.20154009}

\newpage
\clearpage

\setcounter{page}{1}
\renewcommand{\thepage}{S\arabic{page}}

\setcounter{section}{0}
\setcounter{subsection}{0}

\setcounter{figure}{0}
\renewcommand{\thefigure}{S\arabic{figure}}
\renewcommand{\theHfigure}{S\arabic{figure}}

\setcounter{table}{0}
\renewcommand{\thetable}{S\arabic{table}}
\renewcommand{\theHtable}{S\arabic{table}}

\setcounter{equation}{0}
\renewcommand{\theequation}{S\arabic{equation}}
\renewcommand{\theHequation}{S\arabic{equation}}

\renewcommand{\thesupproposition}{S\arabic{supproposition}}

\renewcommand{\thesection}{S\arabic{section}}
\renewcommand{\thesubsection}{S\arabic{section}.\arabic{subsection}}
\phantomsection\label{supplementary-material}
\bigskip

\begin{center}

{\large\bf SUPPLEMENTARY MATERIAL}\\
For\\
\textbf{A Bayesian Longitudinal Spatial Normative Model\\ For Individualized Brain Deviation Mapping}
\end{center}

% \begin{description}
% \item[A Bayesian Longitudinal Spatial Normative Model]
% Brief description. (file type)
% \item[R-package for MYNEW routine:]
% R-package MYNEW containing code to perform the diagnostic methods
% described in the article. The package also contains all datasets used as
% examples in the article. (GNU zipped tar file)
% \item[HIV data set:]
% Data set used in the illustration of MYNEW method in
% Section~\ref{sec-verify} (.txt file).
% \end{description}
 \section{Additional model details}

This supplement provides technical and computational material supporting the main manuscript. Section~S1 records additional modeling details and notation. Section~S2 presents theoretical derivations and proofs that complement the primary methodological development in the main text. Section~S3 describes the simulation implementation in greater detail, including construction of the spatial dependence structure and generation of individualized abnormality patterns. Section~S4 reports additional simulation summaries and calibration diagnostics. Section~S5 provides extended analyses for the OASIS-3 structural neuroimaging application, including preprocessing details, posterior diagnostics, supplementary cortical maps, and additional subject-level analyses. Section~S6 summarizes computational implementation details and software settings.

The proposed Bayesian longitudinal spatial normative model is
\[
Y_{itr}
=
\bm{X}_{it}^{\top}\bm{\beta}_r
+
b_i
+
u_{ir}
+
\varepsilon_{itr},
\]
where $Y_{itr}$ denotes the structural neuroimaging measurement for subject $i$, visit $t$, and region $r$, $\bm{X}_{it}$ contains visit-level covariates, $b_i$ is a subject-specific random intercept, and $u_{ir}$ represents the latent subject-specific spatial deviation process. The residual errors satisfy
\[
\varepsilon_{itr}\sim \mathcal{N}(0,\sigma^2).
\]

The subject-level random intercepts are modeled as
\[
b_i \sim \mathcal{N}(0,\sigma_b^2),
\]
and the regional spatial deviation vectors are assigned the prior
\[
\bm{u}_i
\sim
\mathcal{N}
\left(
\bm{0},
\tau_u^2\bm{Q}(\rho)^{-1}
\right),
\]
where $\bm{Q}(\rho)$ is a spatial precision matrix constructed from a region-level adjacency structure. The primary inferential target is the individualized deviation map $\bm{u}_i$, which quantifies how each subject deviates from the expected normative trajectory after accounting for age, repeated measurements, and anatomical dependence across regions.

\section{Additional theoretical results and proofs}

\subsection{Nested submodel structure}

\begin{supproposition}[Nested submodels]
Consider the hierarchical model
\[
Y_{itr}
=
\bm{X}_{it}^{\top}\bm{\beta}_r
+
b_i
+
u_{ir}
+
\varepsilon_{itr}.
\]
Several commonly used neuroimaging normative models arise as special cases under restrictions on the variance components.

\begin{enumerate}
\item If $\tau_u^2=0$, the model reduces to a longitudinal non-spatial normative model.

\item If $\sigma_b^2=0$, dependence across regions is induced only through the spatial deviation process.

\item If $\tau_u^2=0$, $\sigma_b^2=0$, and $T_i=1$ for all subjects, the model reduces to an independent cross-sectional regional model.

\item If the regional mean functions in part (3) are estimated separately using nonparametric predictive models, the resulting formulation corresponds to conventional region-wise normative modeling approaches commonly used in neuroimaging studies.
\end{enumerate}
\end{supproposition}

\begin{proof}
For part (1), setting $\tau_u^2=0$ implies
\[
u_{ir}=0
\]
almost surely for all regions. The model simplifies to
\[
Y_{itr}
=
\bm{X}_{it}^{\top}\bm{\beta}_r
+
b_i
+
\varepsilon_{itr},
\]
which retains within-subject longitudinal dependence through $b_i$ but removes spatial dependence across regions.

For part (2), setting $\sigma_b^2=0$ implies
\[
b_i=0
\]
almost surely. The model becomes
\[
Y_{itr}
=
\bm{X}_{it}^{\top}\bm{\beta}_r
+
u_{ir}
+
\varepsilon_{itr}.
\]
In this setting, the only source of dependence across regions arises from the covariance structure of the spatial deviation vector $\bm{u}_i$.

For part (3), if $\tau_u^2=0$, $\sigma_b^2=0$, and $T_i=1$, then
\[
Y_{i1r}
=
\bm{X}_{i1}^{\top}\bm{\beta}_r
+
\varepsilon_{i1r}.
\]
Since the residual terms are independent across regions, each region is modeled independently without spatial or longitudinal borrowing.

Part (4) follows by replacing the parametric regional mean structure with separate nonparametric predictive functions estimated independently for each region. This recovers the structure used in many existing region-wise neuroimaging normative models.
\end{proof}

\subsection{Posterior distribution of the spatial deviation map}

For subject $i$, define the residualized observation vector
\[
\widetilde{\bm{Y}}_{it}
=
\bm{Y}_{it}
-
\bm{B}^{\top}\bm{X}_{it}
-
b_i\bm{1}_R,
\]
where $\bm{1}_R$ denotes an $R$-dimensional vector of ones.

Conditional on $\bm{u}_i$,
\[
\widetilde{\bm{Y}}_{it}
=
\bm{u}_i
+
\bm{\varepsilon}_{it},
\]
with
\[
\bm{\varepsilon}_{it}
\sim
\mathcal{N}
(
\bm{0},
\sigma^2\bm{I}_R
).
\]

The prior density of $\bm{u}_i$ is proportional to
\[
\exp
\left\{
-
\frac{1}{2\tau_u^2}
\bm{u}_i^{\top}
\bm{Q}(\rho)
\bm{u}_i
\right\}.
\]

The likelihood contribution from all visits of subject $i$ is proportional to
\[
\prod_{t=1}^{T_i}
\exp
\left\{
-
\frac{1}{2\sigma^2}
(
\widetilde{\bm{Y}}_{it}
-
\bm{u}_i
)^{\top}
(
\widetilde{\bm{Y}}_{it}
-
\bm{u}_i
)
\right\}.
\]

Combining the prior and likelihood gives the posterior kernel
\[
\exp
\left[
-
\frac{1}{2\tau_u^2}
\bm{u}_i^{\top}
\bm{Q}(\rho)
\bm{u}_i
-
\frac{1}{2\sigma^2}
\sum_{t=1}^{T_i}
(
\widetilde{\bm{Y}}_{it}
-
\bm{u}_i
)^{\top}
(
\widetilde{\bm{Y}}_{it}
-
\bm{u}_i
)
\right].
\]

Expanding the likelihood term,
\begin{align*}
\sum_{t=1}^{T_i}
(
\widetilde{\bm{Y}}_{it}
-
\bm{u}_i
)^{\top}
(
\widetilde{\bm{Y}}_{it}
-
\bm{u}_i
)
&=
\sum_{t=1}^{T_i}
\widetilde{\bm{Y}}_{it}^{\top}
\widetilde{\bm{Y}}_{it}
-
2
\bm{u}_i^{\top}
\sum_{t=1}^{T_i}
\widetilde{\bm{Y}}_{it}
+
T_i
\bm{u}_i^{\top}\bm{u}_i.
\end{align*}

Dropping terms independent of $\bm{u}_i$ yields
\[
\exp
\left\{
-
\frac{1}{2}
\bm{u}_i^{\top}
\left(
\tau_u^{-2}\bm{Q}(\rho)
+
T_i\sigma^{-2}\bm{I}_R
\right)
\bm{u}_i
+
\bm{u}_i^{\top}
\left(
\sigma^{-2}
\sum_{t=1}^{T_i}
\widetilde{\bm{Y}}_{it}
\right)
\right\}.
\]

Completing the square gives
\[
\bm{u}_i
\mid
\{\bm{Y}_{it}\}_{t=1}^{T_i},
\bm{B},
b_i,
\sigma^2,
\tau_u^2,
\rho
\sim
\mathcal{N}
(
\bm{m}_i,
\bm{S}_i
),
\]
where
\[
\bm{S}_i
=
\left(
\tau_u^{-2}\bm{Q}(\rho)
+
T_i\sigma^{-2}\bm{I}_R
\right)^{-1},
\]
and
\[
\bm{m}_i
=
\bm{S}_i
\left(
\sigma^{-2}
\sum_{t=1}^{T_i}
\widetilde{\bm{Y}}_{it}
\right).
\]

The posterior mean $\bm{m}_i$ therefore combines subject-specific longitudinal information with anatomically structured spatial borrowing across regions.

\subsection{Posterior predictive distribution}

\begin{supproposition}[Posterior predictive distribution]
For a future visit $t^\star$ of subject $i$ with covariates $\bm{X}_{it^\star}$,
\[
\bm{Y}_{it^\star}^{\text{new}}
\mid
\{\bm{Y}_{it}\}_{t=1}^{T_i}
\sim
\mathcal{N}
\left(
\bm{B}^{\top}\bm{X}_{it^\star}
+
b_i\bm{1}_R
+
\bm{m}_i,
\,
\bm{S}_i+\sigma^2\bm{I}_R
\right).
\]
\end{supproposition}

\begin{proof}
The predictive model satisfies
\[
\bm{Y}_{it^\star}^{\text{new}}
=
\bm{B}^{\top}\bm{X}_{it^\star}
+
b_i\bm{1}_R
+
\bm{u}_i
+
\bm{\varepsilon}_{it^\star}^{\text{new}},
\]
where
\[
\bm{\varepsilon}_{it^\star}^{\text{new}}
\sim
\mathcal{N}
(
\bm{0},
\sigma^2\bm{I}_R
)
\]
and is conditionally independent of $\bm{u}_i$.

Since
\[
\bm{u}_i
\mid
\cdots
\sim
\mathcal{N}
(
\bm{m}_i,
\bm{S}_i
),
\]
the sum
\[
\bm{u}_i
+
\bm{\varepsilon}_{it^\star}^{\text{new}}
\]
is Gaussian with mean $\bm{m}_i$ and covariance
\[
\bm{S}_i+\sigma^2\bm{I}_R.
\]

Adding the deterministic mean component gives the stated predictive distribution.
\end{proof}

\subsection{Oracle standardization}

\begin{supproposition}[Oracle standardization]
Assume the normative model is correctly specified and that all model parameters are known. Let
\[
\mu_{itr}
=
\E(Y_{itr}\mid \text{history}),
\]
and
\[
v_{itr}
=
\Var(Y_{itr}\mid \text{history}).
\]
Then
\[
Z_{itr}^{\circ}
=
\frac{
Y_{itr}
-
\mu_{itr}
}{
\sqrt{v_{itr}}
}
\sim
\mathcal{N}(0,1).
\]
\end{supproposition}

\begin{proof}
Under correct model specification,
\[
Y_{itr}
\mid
\text{history}
\sim
\mathcal{N}
(
\mu_{itr},
v_{itr}
).
\]

Subtracting the true predictive mean and dividing by the true predictive standard deviation standardizes the Gaussian variable. Therefore,
\[
\frac{
Y_{itr}
-
\mu_{itr}
}{
\sqrt{v_{itr}}
}
\sim
\mathcal{N}(0,1).
\]
\end{proof}

\subsection{Bayes optimality of spatial borrowing}

A principal motivation for introducing $\bu_i$ with structured covariance is that spatial borrowing should improve estimation of individualized deviation maps when the regional dependence model is correctly specified. This can be stated formally.

\begin{supproposition}[Posterior mean as Bayes-optimal deviation estimator]
\label{prop:s_bayes_optimal}
Fix subject $i$ and suppose the hierarchical model is correctly specified. Under squared error loss
\[
L(\widehat{\bu}_i,\bu_i)=\|\widehat{\bu}_i-\bu_i\|_2^2,
\]
the posterior mean
\[
\widehat{\bu}_i^{\mathrm{Bayes}}
=
\E\!\left(
\bu_i
\mid
\{\bm{Y}_{it}\}_{t=1}^{T_i},
\bm{B},
b_i,
\sigma^2,
\tau_u^2,
\rho
\right)
=
\bm{m}_i
\]
minimizes posterior expected loss over all measurable estimators of $\bu_i$ based on subject $i$'s data.
\end{supproposition}

\begin{proof}
Let $\widehat{\bu}_i$ be any measurable estimator based on the observed data for subject $i$. Conditional on the observed data and model parameters, the posterior expected loss is
\[
R(\widehat{\bu}_i)
=
\E\!\left[
\|\widehat{\bu}_i-\bu_i\|_2^2
\mid
\{\bm{Y}_{it}\}_{t=1}^{T_i},
\bm{B},
b_i,
\sigma^2,
\tau_u^2,
\rho
\right].
\]
Let $\bm{m}_i=\E(\bu_i\mid\{\bm{Y}_{it}\}_{t=1}^{T_i},\bm{B},b_i,\sigma^2,\tau_u^2,\rho)$. Expanding the squared loss around $\bm{m}_i$ gives
\begin{align*}
\|\widehat{\bu}_i-\bu_i\|_2^2
&=
\|\widehat{\bu}_i-\bm{m}_i+\bm{m}_i-\bu_i\|_2^2 \\
&=
\|\widehat{\bu}_i-\bm{m}_i\|_2^2
+
\|\bm{m}_i-\bu_i\|_2^2
+
2(\widehat{\bu}_i-\bm{m}_i)^\top(\bm{m}_i-\bu_i).
\end{align*}
Taking posterior expectation conditional on the observed data yields
\begin{align*}
R(\widehat{\bu}_i)
&=
\|\widehat{\bu}_i-\bm{m}_i\|_2^2
+
\E\!\left[
\|\bm{m}_i-\bu_i\|_2^2
\mid
\{\bm{Y}_{it}\}_{t=1}^{T_i},
\bm{B},
b_i,
\sigma^2,
\tau_u^2,
\rho
\right] \\
&\quad+
2(\widehat{\bu}_i-\bm{m}_i)^\top
\E\!\left[
\bm{m}_i-\bu_i
\mid
\{\bm{Y}_{it}\}_{t=1}^{T_i},
\bm{B},
b_i,
\sigma^2,
\tau_u^2,
\rho
\right].
\end{align*}
By definition of $\bm{m}_i$ as the posterior mean,
\[
\E\!\left[
\bm{m}_i-\bu_i
\mid
\{\bm{Y}_{it}\}_{t=1}^{T_i},
\bm{B},
b_i,
\sigma^2,
\tau_u^2,
\rho
\right]
=
\bm{0}.
\]
Therefore,
\[
R(\widehat{\bu}_i)
=
\|\widehat{\bu}_i-\bm{m}_i\|_2^2
+
\E\!\left[
\|\bm{m}_i-\bu_i\|_2^2
\mid
\{\bm{Y}_{it}\}_{t=1}^{T_i},
\bm{B},
b_i,
\sigma^2,
\tau_u^2,
\rho
\right].
\]
The second term does not depend on $\widehat{\bu}_i$, and the first term is minimized uniquely at $\widehat{\bu}_i=\bm{m}_i$. Hence the posterior mean $\bm{m}_i$ is the Bayes estimator of $\bu_i$ under squared error loss.
\end{proof}

\begin{remark}
This result gives a decision-theoretic interpretation to the proposed spatial deviation map. When the hierarchical model and the spatial covariance structure are correctly specified, the posterior mean $\bm{m}_i$ is not merely a smoothed residual vector. It is the Bayes-optimal estimator of the individualized spatial deviation process under squared error loss. Since $\bm{m}_i$ depends on $\bQ(\rho)$ through the posterior covariance and precision structure, the spatial adjacency model directly enters the optimal estimator.
\end{remark}

\section{Additional simulation implementation details}

\subsection{Simulation design}

The simulation study was designed to evaluate recovery of individualized deviation maps under varying spatial and longitudinal data structures. Each simulated dataset contained repeated measurements from multiple subjects observed across multiple brain regions. Data were generated directly from the hierarchical structure assumed by the proposed Bayesian longitudinal spatial normative model.

The simulations varied several characteristics relevant to longitudinal neuroimaging studies, including the strength of spatial dependence, nonlinear mean trajectories, irregular visit schedules, and missing follow-up. Six scenarios were considered in total:
\begin{enumerate}
\item baseline linear longitudinal structure,
\item moderate spatial dependence,
\item strong spatial dependence,
\item variable visit schedules,
\item missing longitudinal follow-up,
\item nonlinear mean trajectories.
\end{enumerate}

For each simulated dataset, individualized abnormality maps were generated through latent subject-specific spatial deviation vectors. The same covariate structure and region-level adjacency assumptions were then used during model fitting.

\subsection{Construction of the spatial adjacency matrix}

The spatial precision matrix was constructed from a region-level adjacency matrix
\[
\bm{W}.
\]
The matrix encoded neighborhood relationships between anatomically related regions. The diagonal degree matrix was defined as
\[
\bm{D}_{rr}
=
\sum_{r'}
W_{rr'}.
\]

The precision matrix used throughout the simulations was
\[
\bm{Q}(\rho)
=
\bm{D}
-
\rho\bm{W},
\]
where $\rho$ controlled the strength of spatial dependence.

To ensure numerical stability and positive definiteness during posterior computation, a small diagonal ridge term was added when necessary. Different simulation scenarios used different values of $\rho$ to represent weak, moderate, and strong anatomical dependence.

\subsection{Simulation performance metrics}

Model performance was evaluated using both estimation accuracy and calibration criteria. Accuracy metrics included mean bias, mean squared error, and deviation-map mean squared error. The deviation-map mean squared error was used as the primary measure of individualized abnormality recovery because the principal inferential target of the proposed framework is the latent subject-specific deviation map.

Calibration metrics were based on the standardized deviation scores produced by each fitted model. Specifically, the simulations evaluated:
\[
\E(Z),
\qquad
\Var(Z),
\qquad
\Pr(|Z|>1.96).
\]

Well-calibrated normative models should produce approximately standard normal deviation scores under correct specification. Deviations from these targets indicate underestimation or overestimation of predictive uncertainty.

\section{Additional simulation results}

The primary simulation summaries are reported in Tables~\ref{tab:accuracy}--\ref{tab:calibration} and Figures~\ref{fig:accuracy}--\ref{fig:calibration} of the main manuscript. Additional simulation diagnostics examined during model validation did not materially alter the reported findings and are therefore not presented separately.

Across simulation replicates, Bayesian model fitting remained stable, with representative fitted models exhibiting satisfactory convergence behavior, no divergent transitions, and $\hat{R}$ values close to 1 across monitored parameters. Calibration summaries further showed that standardized deviation scores remained close to their target reference behavior across the simulated settings.
\section{Extended real-data analysis}

\subsection{OASIS-3 structural neuroimaging data}

The real-data application used longitudinal structural neuroimaging measurements from the OASIS-3 study \citep{LaMontagne2019}. OASIS-3 is a longitudinal neuroimaging, clinical, cognitive, and biomarker dataset designed to support studies of healthy aging and Alzheimer-type neurodegeneration. The dataset includes repeated MRI observations collected across multiple visits together with demographic and clinical information.

The present analysis focused on FreeSurfer-derived cortical thickness and subcortical volume measures. Structural measurements were merged with visit-level demographic and clinical variables and converted into long format so that each row represented a subject-region-visit combination.

Subjects were retained if they had at least two MRI visits after preprocessing and quality-control procedures. Observations with missing regional measurements, missing age information, or incomplete covariate data were excluded prior to model fitting.

\subsection{Regional measures and covariates}

The analysis focused on cortical and subcortical regions commonly implicated in aging and neurodegenerative disease progression. These included bilateral hippocampus, amygdala, entorhinal cortex, parahippocampal cortex, temporal pole, inferior temporal cortex, middle temporal cortex, posterior cingulate cortex, and precuneus.

Regional measures were standardized within region before modeling:
\[
Y_{itr}^{\text{std}}
=
\frac{
Y_{itr}
-
\bar{Y}_r
}{
s_r
},
\]
where $\bar{Y}_r$ and $s_r$ denote the sample mean and sample standard deviation for region $r$. Standardization allowed cortical thickness and subcortical volume measurements to be analyzed jointly on a comparable scale.

Age at MRI visit was standardized and included as the primary longitudinal covariate. FreeSurfer processing version was additionally included as a categorical adjustment variable to account for possible processing-version related variation in structural measurements.

\subsection{Real-data model specification}

Three models were compared in the OASIS-3 application:
\begin{enumerate}
\item an independent cross-sectional regional model,
\item a longitudinal non-spatial mixed-effects model,
\item the proposed Bayesian subject-specific spatial model.
\end{enumerate}

The proposed Bayesian model was
\[
Y_{itr}
=
\alpha_r
+
\beta_rA_{it}
+
\gamma_{v[it]}
+
b_i
+
u_{ir}
+
\varepsilon_{itr},
\]
where $A_{it}$ denotes standardized age at visit and $\gamma_{v[it]}$ denotes the FreeSurfer-version effect.

The spatial adjacency structure connected anatomically related regions and bilateral homologous regions. Medial temporal, lateral temporal, and posterior cortical regions were additionally grouped into broader anatomical families to induce biologically plausible spatial borrowing across neighboring structures.

\subsection{Posterior computation and convergence assessment}

The Bayesian model was implemented in Stan using the \texttt{cmdstanr} interface \citep{carpenter2017stan}. Four Markov chains were run with 1,000 warmup iterations and 1,000 post-warmup sampling iterations per chain. The target acceptance probability was set to 0.99 and the maximum tree depth was set to 13.

Convergence was evaluated using traceplots, the potential scale reduction statistic $\widehat{R}$, and effective sample size summaries. Posterior summaries for the key variance parameters were:
\[
\sigma
=
0.491,
\qquad
\sigma_b
=
0.517,
\qquad
\tau_u
=
0.928,
\qquad
\sigma_{\text{version}}
=
0.305.
\]

All key parameters exhibited satisfactory convergence behavior with $\widehat{R}$ values close to one and adequate effective sample sizes.

\subsection{Additional real-data findings}

The Bayesian subject-specific spatial model produced substantially improved individualized fit relative to the benchmark models. The residual standard deviation decreased from 0.926 under the independent cross-sectional model and 0.774 under the longitudinal non-spatial model to 0.423 under the Bayesian spatial model. The proposed framework also produced lower mean absolute deviation and improved calibration of standardized deviation scores.

Regional summaries identified the temporal pole, entorhinal cortex, inferior temporal cortex, posterior cingulate cortex, and amygdala as regions exhibiting the greatest burden of extreme deviations. These findings are biologically consistent with regions commonly implicated in aging and Alzheimer-type neurodegenerative processes.

Subject-level analyses further demonstrated substantial heterogeneity in individualized abnormality burden across participants. Several subjects exhibited persistent extreme deviation trajectories across repeated visits despite relatively preserved cognitive scores, illustrating the ability of the proposed framework to identify individualized structural abnormality patterns beyond simple group-average summaries.

\subsection{Supplementary real-data figures}
\begin{figure}[!ht]
\centering
\includegraphics[width=3in,height=\textheight]{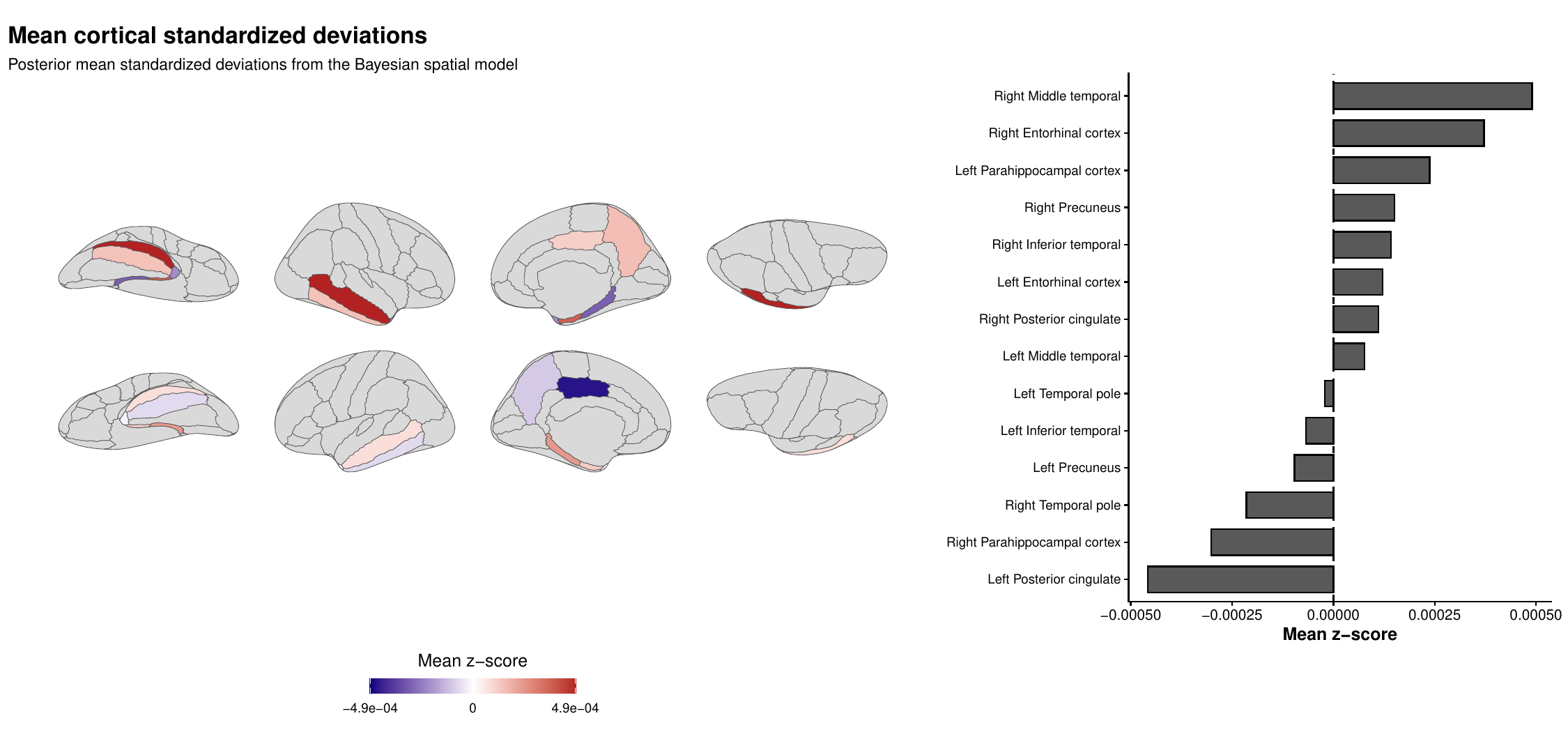}
\caption{Mean cortical standardized deviations from the Bayesian subject-specific spatial model. This figure complements the main cortical extreme-deviation map by displaying posterior mean deviations rather than tail probabilities.}
\label{fig:s_mean_cortical}
\end{figure}

\begin{figure}[!ht]
\centering
\includegraphics[width=3.5in,height=\textheight]{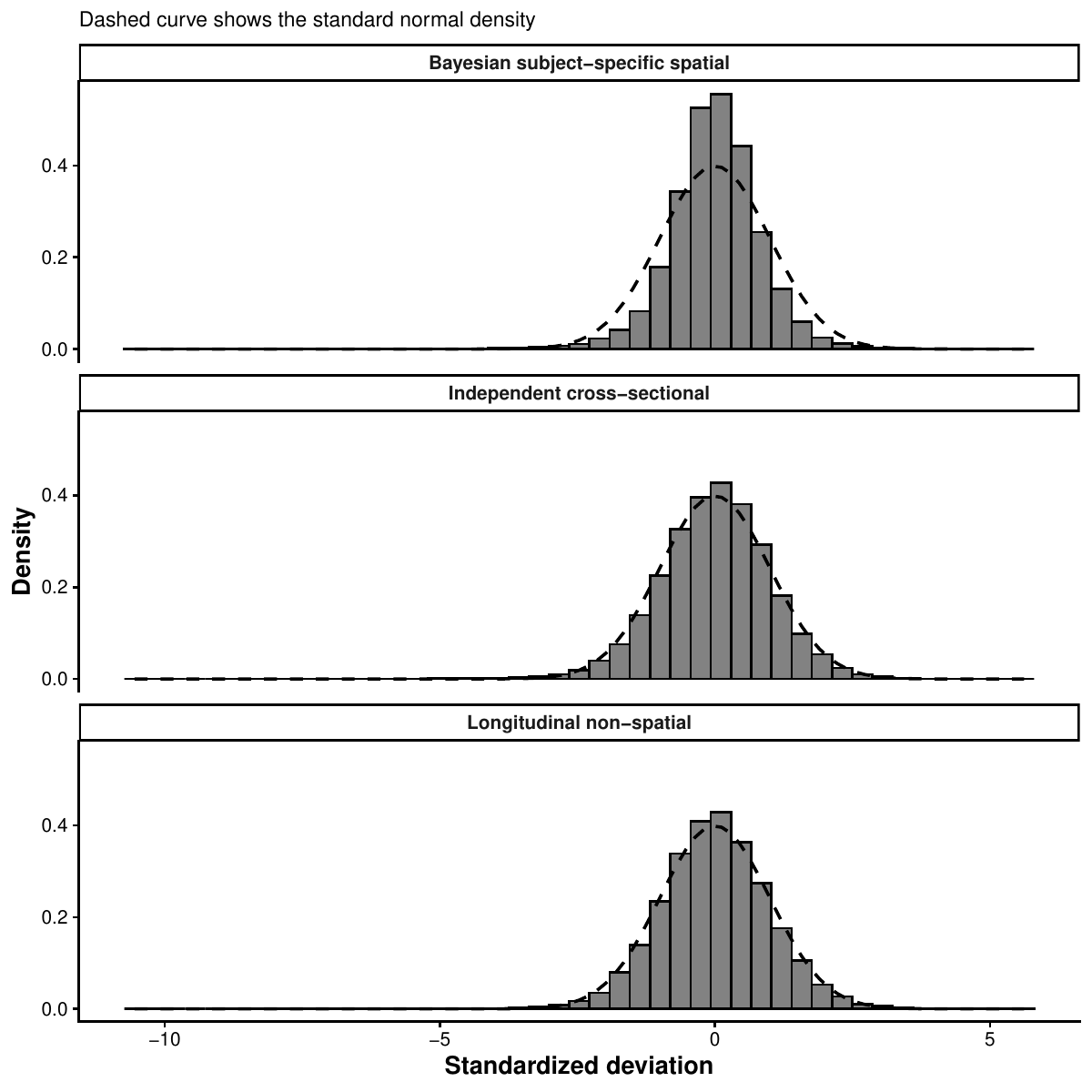}
\caption{Calibration histograms for standardized deviation scores across the three OASIS-3 models. The dashed curve represents the standard normal density.}
\label{fig:s_histograms}
\end{figure}

\begin{figure}[!ht]
\centering
\includegraphics[width=3in,height=\textheight]{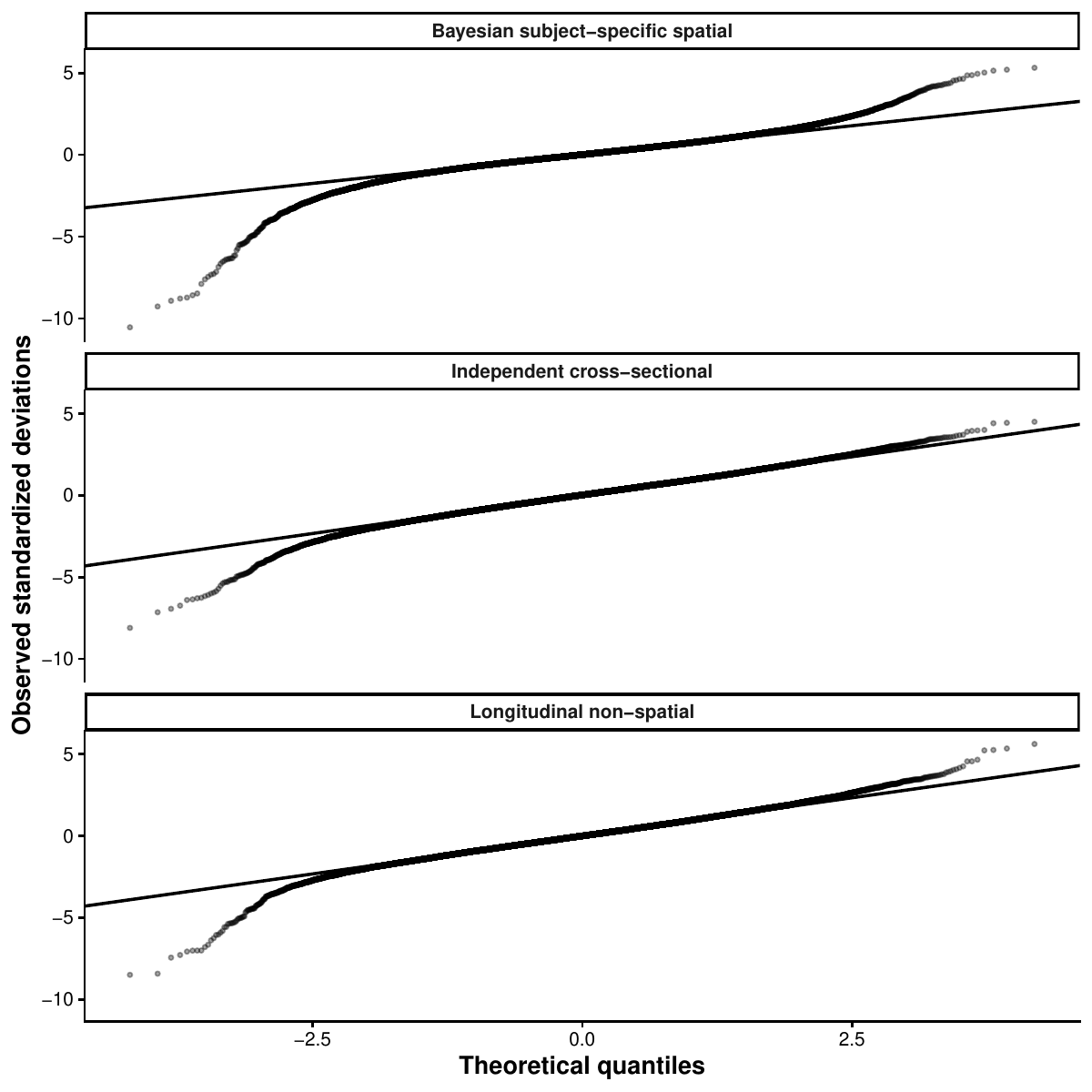}
\caption{Normal QQ plots for standardized deviation scores across the independent cross-sectional, longitudinal non-spatial, and Bayesian subject-specific spatial models.}
\label{fig:s_qq}
\end{figure}

\begin{figure}[!ht]
\centering
\includegraphics[width=3in,height=\textheight]{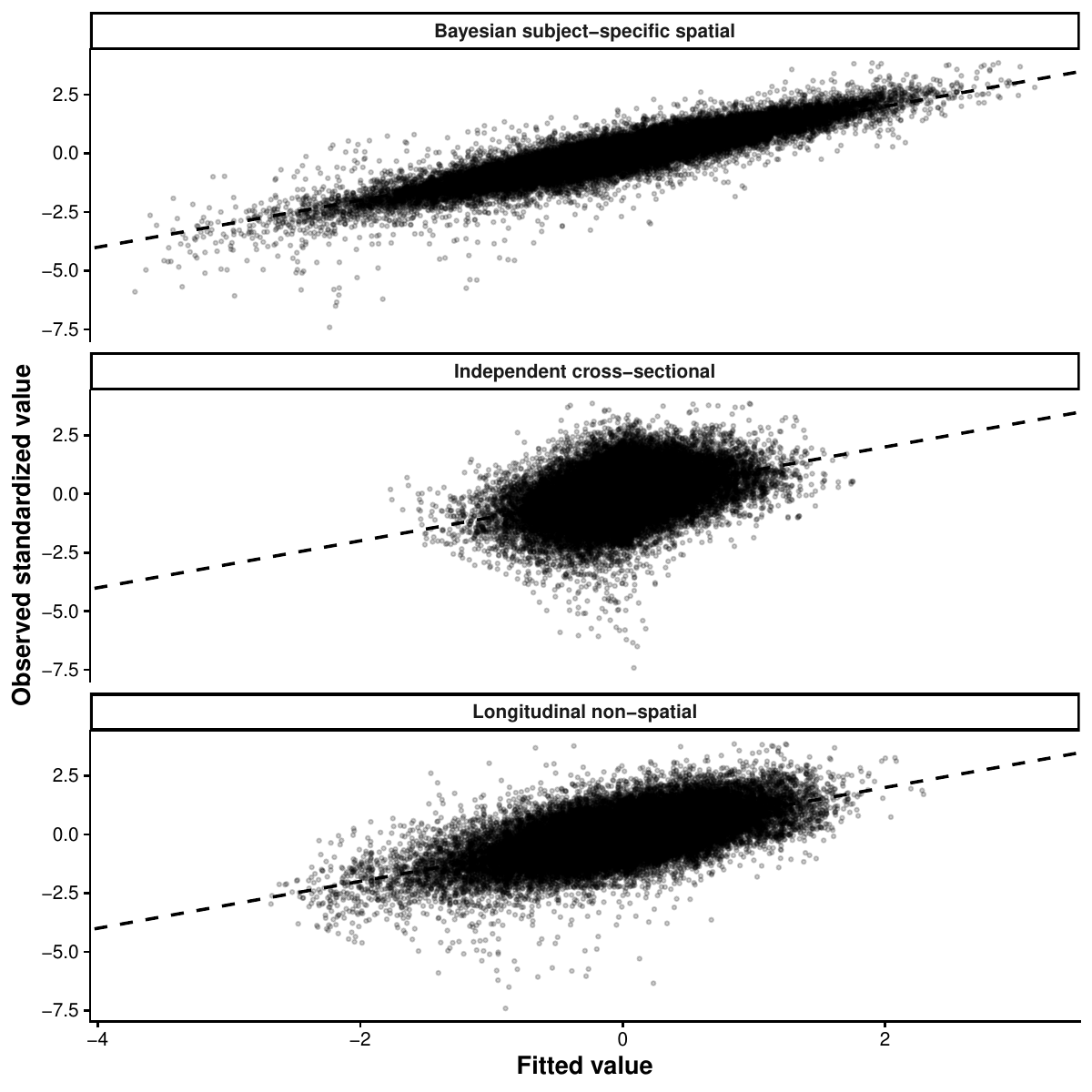}
\caption{Observed versus fitted standardized structural measurements across the three OASIS-3 models. Points closer to the diagonal indicate better agreement between fitted and observed standardized values.}
\label{fig:s_obs_pred}
\end{figure}

\begin{figure}[!ht]
\centering
\includegraphics[width=3in,height=\textheight]{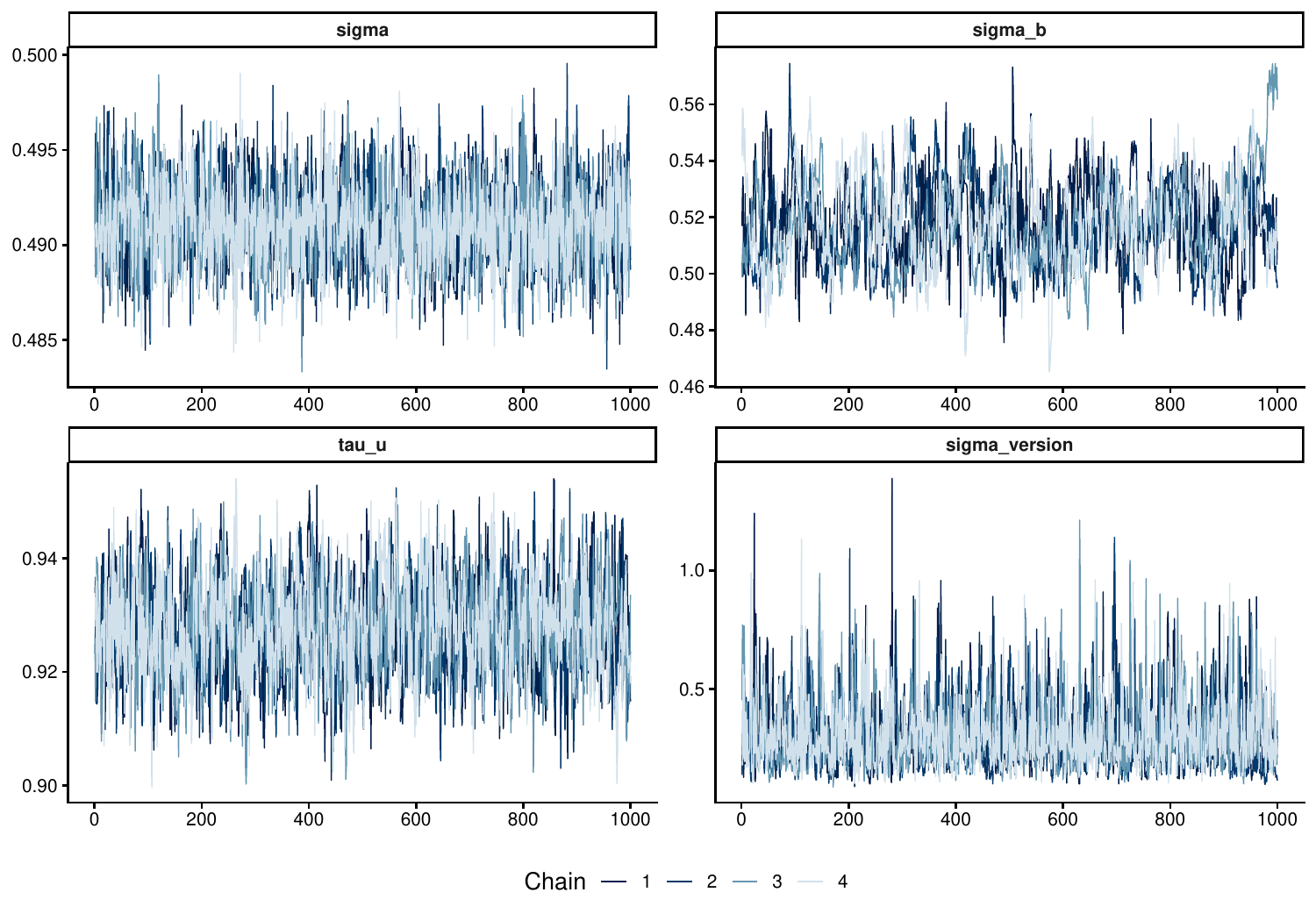}
\caption{MCMC traceplots for key posterior scale parameters in the Bayesian subject-specific spatial model. Chains showed stable mixing around common posterior ranges.}
\label{fig:s_traceplots}
\end{figure}

\begin{figure}[!ht]
\centering
\includegraphics[width=3in,height=\textheight]{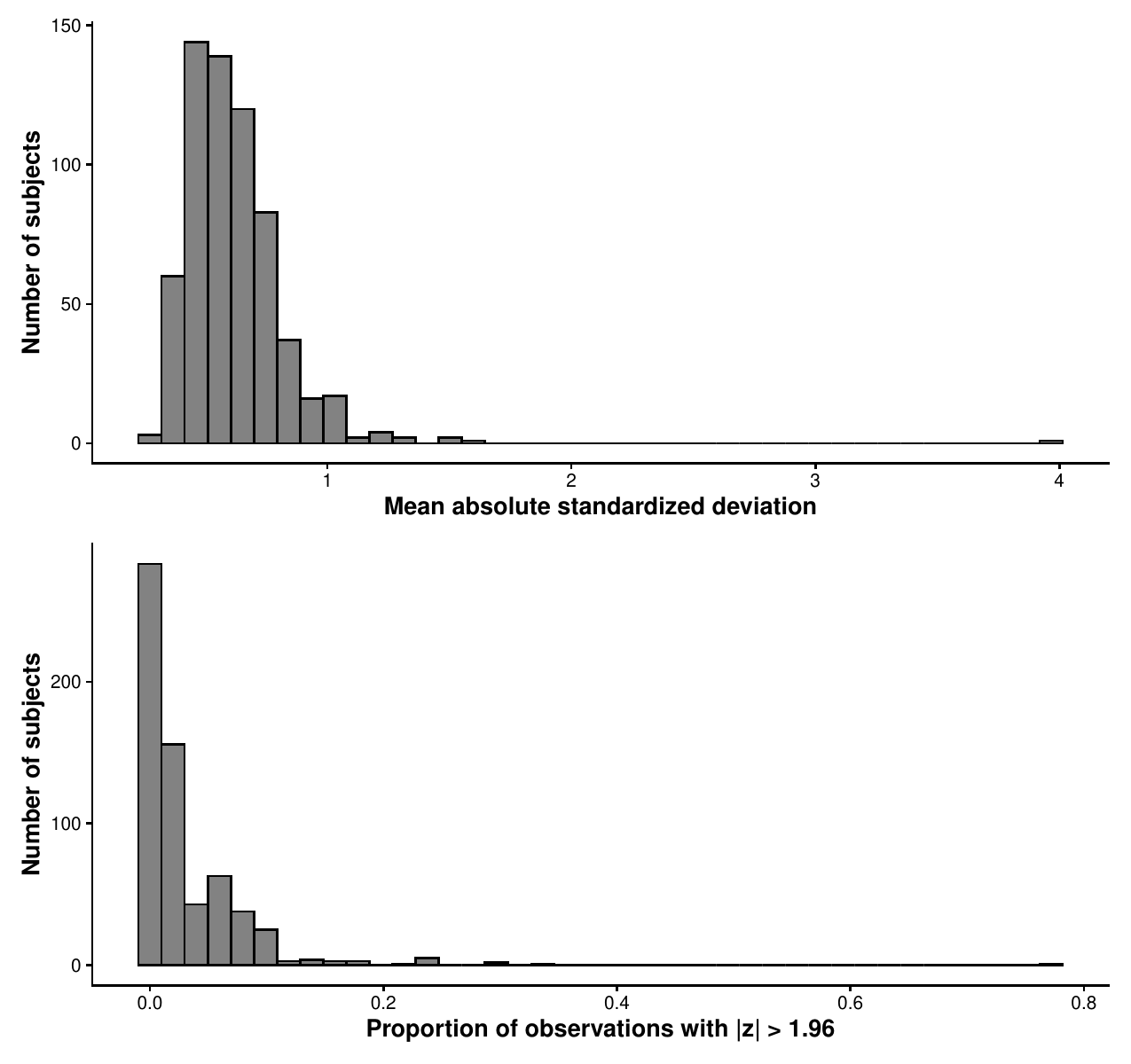}
\caption{Distribution of subject-level abnormality burden in the OASIS-3 application, summarized by mean absolute standardized deviation and the proportion of observations exceeding $|Z|>1.96$.}
\label{fig:s_subject_burden}
\end{figure}

\newpage
\subsection{Sensitivity analyses}
Multiple sensitivity analyses were considered to evaluate the robustness of the proposed framework. These included alternative specifications of the spatial adjacency matrix, exclusion of subjects with unusually large deviation burdens, and comparison with models excluding spatial dependence.
\begin{longtable}[]{@{}lrrrrrrr@{}}
\caption{Posterior summaries for key variance and scale parameters in the OASIS-3 Bayesian subject-specific spatial model.}
\label{tab:s_post_summary}\tabularnewline

\toprule\noalign{}
Parameter & Mean & Median & SD & 5\% & 95\% & $\widehat{R}$ & Bulk ESS \\
\midrule\noalign{}
\endfirsthead

\toprule\noalign{}
Parameter & Mean & Median & SD & 5\% & 95\% & $\widehat{R}$ & Bulk ESS \\
\midrule\noalign{}
\endhead

\bottomrule\noalign{}
\endlastfoot

$\sigma$
& 0.4911 & 0.4911 & 0.0023 & 0.4875 & 0.4949 & 1.0024 & 2741.8 \\

$\sigma_b$
& 0.5174 & 0.5164 & 0.0148 & 0.4952 & 0.5425 & 1.0077 & 226.0 \\

$\tau_u$
& 0.9277 & 0.9277 & 0.0085 & 0.9140 & 0.9416 & 1.0025 & 1262.2 \\

$\sigma_{\mathrm{version}}$
& 0.3047 & 0.2669 & 0.1472 & 0.1405 & 0.5874 & 0.9995 & 1612.3 \\

\end{longtable}

\begin{longtable}[]{@{}lrrrrrr@{}}
\caption{Selected subjects used for the illustrative low-deviation and high-abnormality case-study profiles.}
\label{tab:s_case_subjects}\tabularnewline

\toprule\noalign{}
Subject &
Observations &
Visits &
Mean $|Z|$ &
Max $|Z|$ &
Extreme proportion &
Mean MMSE \\
\midrule\noalign{}
\endfirsthead

\toprule\noalign{}
Subject &
Observations &
Visits &
Mean $|Z|$ &
Max $|Z|$ &
Extreme proportion &
Mean MMSE \\
\midrule\noalign{}
\endhead

\bottomrule\noalign{}
\endlastfoot

OAS30765
& 57 & 3 & 0.3481 & 1.2055 & 0.0000 & 26.33 \\

OAS30303
& 57 & 3 & 3.9663 & 10.5478 & 0.7719 & 29.67 \\

\end{longtable}
The principal findings remained stable across these analyses. In particular, the Bayesian spatial model consistently produced lower deviation-map reconstruction error and more stable individualized abnormality estimates than the benchmark models. Regional abnormality patterns remained concentrated within temporolimbic and posterior cortical regions across alternative specifications.
\begin{longtable}[]{@{}p{6.5cm}rrrr@{}}
\caption{Regions with the largest extreme-deviation burden in the OASIS-3 application.}
\label{tab:s_top_regions}\tabularnewline

\toprule\noalign{}
Region &
Observations &
Mean $Z$ &
SD $Z$ &
$\Pr(|Z|>1.96)$ \\
\midrule\noalign{}
\endfirsthead

\toprule\noalign{}
Region &
Observations &
Mean $Z$ &
SD $Z$ &
$\Pr(|Z|>1.96)$ \\
\midrule\noalign{}
\endhead

\bottomrule\noalign{}
\endlastfoot

Right temporal pole thickness
& 1902 & -0.0002 & 1.1064 & 0.0778 \\

Left temporal pole thickness
& 1902 & 0.0000 & 1.0881 & 0.0705 \\

Right entorhinal thickness
& 1902 & 0.0004 & 1.0196 & 0.0547 \\

Left entorhinal thickness
& 1902 & 0.0001 & 0.9713 & 0.0515 \\

Right inferior temporal thickness
& 1902 & 0.0001 & 0.9305 & 0.0410 \\

Left inferior temporal thickness
& 1902 & -0.0001 & 0.9271 & 0.0379 \\

Left posterior cingulate thickness
& 1902 & -0.0005 & 1.0108 & 0.0352 \\

Left amygdala volume
& 1902 & 0.0003 & 0.8713 & 0.0342 \\

Right amygdala volume
& 1902 & -0.0002 & 0.9169 & 0.0331 \\

Left middle temporal thickness
& 1902 & 0.0001 & 0.8350 & 0.0289 \\

\end{longtable}
\section{Computational details}

All analyses were conducted in R using \texttt{cmdstanr}, \texttt{posterior}, \texttt{ggplot2}, \texttt{dplyr}, \texttt{Matrix}, and \texttt{brms}. Bayesian posterior inference was performed using Hamiltonian Monte Carlo implemented in Stan.

The primary computational settings included:
\begin{itemize}
\item four Markov chains,
\item 1{,}000 warmup iterations per chain,
\item 1{,}000 post-warmup sampling iterations per chain,
\item target acceptance probability of 0.99,
\item maximum tree depth of 13.
\end{itemize}

Posterior convergence was assessed using traceplots, effective sample size summaries, and $\widehat{R}$ statistics. Posterior summaries were computed using the \texttt{posterior} package.

The complete analysis pipeline, including preprocessing, simulation generation, Bayesian model fitting, and visualization, was implemented using reproducible scripts.

\bibliography{bibliography.bib}

\end{document}